\title{Extended MSO Model Checking via Small Vertex Integrity}
\author{Tatsuya Gima}
{Nagoya University, Nagoya, Japan 
}
{gima@nagoya-u.jp}
{}
{}
\author{Yota Otachi}
{Nagoya University, Nagoya, Japan
\and \url{https://www.math.mi.i.nagoya-u.ac.jp/~otachi/}}
{otachi@nagoya-u.jp}
{https://orcid.org/0000-0002-0087-853X}
{JSPS KAKENHI Grant Numbers
  JP18H04091, 
  JP18K11168, 
  JP18K11169, 
  JP20H05793, 
  JP21K11752, 
  JP22H00513. 
}
\authorrunning{T. Gima and Y. Otachi}
\keywords{vertex integrity, monadic second-order logic, cardinality constraint, fixed-parameter tractability}
\begin{document}


\maketitle


\begin{abstract}
We study the model checking problem of 
an extended $\MSO$ with local and global cardinality constraints, called $\MSOGLLin$, introduced recently by 
Knop, Kouteck\'{y}, Masa\v{r}\'{\i}k, and Toufar~[\textit{Log.\ Methods Comput.\ Sci.}, 15(4), 2019].
We show that the problem is fixed-parameter tractable parameterized by vertex integrity,
where vertex integrity is a graph parameter standing between vertex cover number and treedepth.
Our result thus narrows the gap between 
the fixed-parameter tractability parameterized by vertex cover number and 
the W[1]-hardness parameterized by treedepth.
\end{abstract}


\section{Introduction}

\label{sec:intro}
One of the most successful goals in algorithm theory is
to have a meta-theorem that constructs an efficient algorithm
from a \emph{description} of a target problem in a certain format
(see e.g.,~\cite{HlinenyOSG08,GroheK09,Kreutzer11}).
Courcelle's theorem~\cite{Courcelle90mso1,Courcelle92mso3,BoriePT92,CourcelleMR00} is arguably the most successful example
of such an algorithmic meta-theorem, which says (with Bodlaender's algorithm~\cite{Bodlaender98}) that:
if a problem on graphs can be expressed in monadic second-order logic ($\MSO$),
then the problem can be solved in linear time on graphs of bounded treewidth.
Many natural problems that are NP-hard on general graphs
are shown to have expressions in $\MSO$ and thus have linear-time algorithms on graphs of bounded treewidth~\cite{ArnborgLS91}.

Although the expressive power of $\MSO$ captures many problems, 
it is known that $\MSO$ cannot represent some kinds of cardinality constraints~\cite{CourcelleE12}.
For example, it is easy to express the problem of finding a proper vertex coloring with $r$ colors
in $\MSO$ as the existence of a partition of the vertex set into $r$ independent sets,
where the length of the corresponding $\MSO$ formula depends on $r$.
However, the variant of the problem that additionally requires the $r$ independent sets to be of the same size cannot be expressed in $\MSO$ 
even if $r = 2$ (see \cite{CourcelleE12}).
Indeed, this problem is known to be W[1]-hard parameterized by $r$ and treewidth~\cite{FellowsFLRSST11}.\footnote{%
We assume that the readers are familiar with the concept of parameterized complexity.
For standard definitions, see e.g., \cite{CyganFKLMPPS15}.}
See \cite{Szeider10,BelmonteKLMO20,GimaHKKO21} for many other examples of such problems.

For those problems that do not admit $\MSO$ expressions and are hard on graphs of bounded treewidth,
there is a successful line of studies on smaller graph classes with more restricted structures.
For example, by techniques tailored for individual problems, several problems are shown to be tractable
on graphs of bounded vertex cover number~(see e.g., \cite{FellowsLMRS08,EncisoFGKRS09,FialaGK11}).
Such results are known also for more general parameters such as 
twin-cover~\cite{Ganian11}, neighborhood diversity~\cite{Lampis12}, and vertex integrity~\cite{GimaHKKO21}.
Then the natural challenge would be finding a meta-theorem covering (at least some of) such results.
Recently, such meta-theorems are intensively studied for extended $\MSO$ logics with ``cardinality constraints.''
In this paper, we follow this line of research and focus on vertex integrity as the structural parameter of input graphs.
The \emph{vertex integrity} of a graph is the smallest number $k=s+t$ such that
by removing $s$ vertices of the graph,
every component can be made to have at most $t$ vertices.
The concept of vertex integrity was introduced first in the context of network vulnerability~\cite{BarefootES87}.
It basically measures how difficult it is to break a graph into small components by removing a small number of vertices.
This can be seen as a generalization of vertex cover number,
which asks to remove vertices to make the graph edge-less (corresponding to the case $t=1$ of the definition of vertex integrity).
On the other hand, the concept of treedepth can be seen as a recursive generalization of vertex integrity.
Actually, their definitions give us the inequality
$\textrm{treedepth} \le \textrm{vertex integrity} \le \textrm{vertex cover number}-1$
for every graph~(see \cite{GimaHKKO21}).

There is another issue about Courcelle's theorem that the dependency of the running time on the parameters
(the treewidth of the input graph and the length of formula) is quite high~\cite{FrickG04}.
To cope with this issue, faster algorithms are proposed for special cases
such as vertex cover number, neighborhood diversity, and max-leaf number~\cite{Lampis12},
twin-cover \cite{Ganian11}, shrubdepth~\cite{GanianHNOMR12}, treedepth~\cite{GajarskyH15}, and vertex integrity~\cite{LampisM21}.
The methods in these results are similar in the sense
that they find a smaller part of the input graph that is equivalent to the original graph under the given $\MSO$ formula.
Interestingly, these techniques are used also in studies of extended $\MSO$ logics in these special cases.
Our study is no exception, and we use a result in~\cite{LampisM21} as a key lemma.

\paragraph*{Meta-theorems on extended $\MSO$ with cardinality constraints.}
In this direction, there are two different lines of research, which have been merged recently.
One line considers ``global'' cardinality constraints
and the other considers ``local'' cardinality constraints.

Recall that the property of having a partition into $r$ independent sets of equal size cannot be expressed in $\MSO$.
A remedy for this would be to allow a predicate like $|X| = |Y|$.
The concept of global cardinality constraints basically implements this but in a more general way
(see Section~\ref{sec:pre} for formal definitions).
It is known that the model checking for the extended $\MSO$ logic with global cardinality constraints is 
fixed-parameter tractable parameterized by neighborhood diversity~\cite{GanianO13}.

The concept of local cardinality constraints was originally introduced as the \emph{fairness} of a solution~\cite{LinS89a}.
The fairness of a solution (a vertex set or an edge set) upper-bounds 
the number of neighbors each vertex can have in the solution.
It is known that finding a vertex cover with an upper bound on the fairness is W[1]-hard
parameterized by treedepth and feedback vertex set number~\cite{KnopMT19}.
On the other hand, the problem of finding a vertex set satisfying an $\MSO$ formula and fairness constraints 
is fixed-parameter tractable parameterized by neighborhood diversity~\cite{MasarikT20} and by twin-cover~\cite{KnopMT19}.
The general concept of local cardinality constraint extends the concept of fairness
by having for each vertex, an individual set of the allowed numbers of neighbors in the solution.
It is known that the extension of $\MSO$ with local cardinality constraints admits
an XP  algorithm 
(i.e., a slicewise-polynomial time algorithm)
 parameterized by treewidth~\cite{Szeider11}.

Knop, Kouteck\'{y}, Masa\v{r}\'{\i}k, and Toufar~\cite{KnopKMT19} recently converged two lines
and studied the model checking of extended $\MSO$
with both local and global cardinality constraints.
It is shown that the problem admits an XP algorithm parameterized by treewidth.
Furthermore, they showed that
the problem is fixed-parameter tractable parameterized by neighborhood diversity
if the cardinality constraints are ``linear,'' 
where each local cardinality constraint is a set of consecutive integers
and each global cardinality constraint is a linear inequality.

\paragraph*{Our results.}
We study the linear version of the problem in~\cite{KnopKMT19} mentioned above;
that is, the model checking of the extended $\MSO$ logic with linear local and global cardinality constraints.
We show that this problem, called \pname{$\MSOGLLin$ Model Checking}, is fixed-parameter tractable parameterized by vertex integrity.
This result fills a missing part in the map on the complexity of \pname{$\MSOGLLin$ Model Checking}
as vertex integrity fits between these parameters~\cite{GimaHKKO21,LampisM21} (see \cref{fig:paras}).
Note that by $\MSO$, we mean $\MSO_{1}$, which does not allow edge and edge-set variables.
After proving the main result, we show that the same result holds even for the same extension of $\MSO_{2}$.
We apply the results to several problems and show some new examples that are fixed-parameter tractable parameterized by vertex integrity.
We also show that some known results can be obtained as applications of our results.\footnote{Omitted from the conference version.
See \cref{sec:app}.}

\begin{figure}[htb]
\centering

\includegraphics[scale=1]{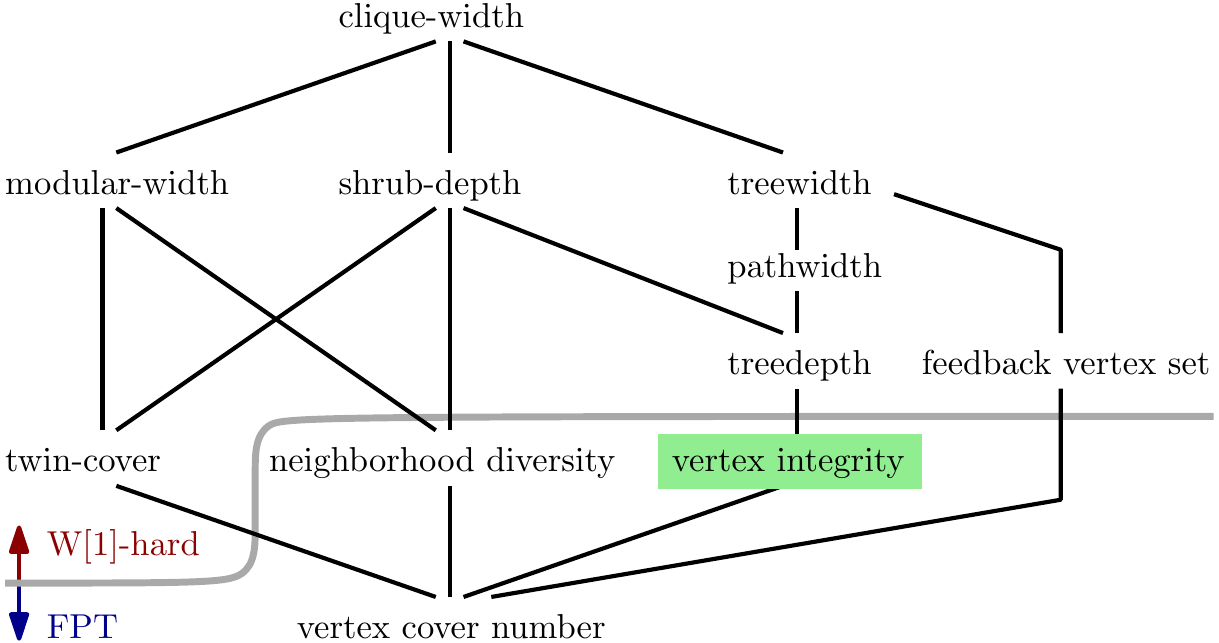}
\caption{Some of the major graph parameters and the complexity of \pname{$\MSOGLLin$ Model Checking}.
  If one parameter is an ancestor of another,
  then the ancestor is upper-bounded by a function of the descendant.
  The fixed-parameter tractability parameterized by neighborhood diversity is shown in~\cite{KnopKMT19}.
  The W[1]-hardness parameterized by twin-cover 
  and by treedepth and feedback vertex set number are shown in~\cite{KnopMT19}.}
\label{fig:paras}
\end{figure}


\section{Preliminaries}
\label{sec:pre}
For two integers $a$ and $b$, 
we define $[a,b] = \{x \in \mathbb{Z} \mid a \leq x \leq b\}$.
We write $[b]$ for the set $[1,b]$.
For two tuples $\vect{A} = (A_1, \ldots, A_p)$ and $\vect{B} = (B_1, \ldots, B_q)$,
the concatenation $(A_1, \ldots, A_p, B_1, \ldots, B_q)$ is denoted by
$\vect{A} \tupplus \vect{B}$.
For a function $f \colon X \to Y$ and a set $A \subseteq X$,
the restriction of $f$ to $A$ is denoted by $f|_A$.

\subsection{Graphs and colored graphs}
We consider undirected graphs without self-loops or multiple
edges.
Let $G=(V,E)$ be a graph.
The vertex set and the edge set of $G$
are denoted by $V(G)$ and $E(G)$, respectively.
A \emph{component} of $G$ is 
a maximal connected induced subgraph of $G$.
For a vertex set $S$ of a graph $G$,
the subgraph of $G$ induced by $V \setminus S$
is denoted by $G-S$.

A \emph{$p$-color list} $\cfun$ of $G$
is a tuple $\cfun= (C_1, \ldots, C_p)$ of $p$ 
vertex sets $C_i \subseteq V$.
Denote the set of colors assigned by $\cfun$ to $v\in V$ by $\cor_{\cfun}(v)$.
Note that each vertex can have multiple colors.
That is, $\cor_{\cfun}(v) = \{i \in [p] \mid v \in C_i\}$.
Note that $\cor_{\cfun}(v)$ can be computed in time polynomial in $|V|$ and $p$.
We call a tuple $(G,\cfun)$ a \emph{$p$-colored graph}.
If the context is clear, we simply call it a graph.

Let $\mathcal{G}_1 = (G_1, \cfun_1)$ and
$\mathcal{G}_2 = (G_2, \cfun_2)$ be $p$-colored graphs.
A bijection $\psi \colon V(G_1) \to V(G_2)$ is an \emph{isomorphism}
from $\mathcal{G}_1$ to $\mathcal{G}_2$
if $\psi$ satisfies the following conditions:
\begin{itemize}
  \item $\{u,v\} \in E(G_1)$ if and only if $\{\psi(u), \psi(v)\} \in E(G_2)$
  for all $u,v \in V(G_1)$;
  \item $\cor_{\cfun_1}(v) = \cor_{\cfun_2}(\psi(v))$ for all $v \in V(G_1)$.
\end{itemize}
We say that $\mathcal{G}_1$ and $\mathcal{G}_2$ are \emph{isomorphic} 
if such $\psi$ exists.

\subsection{Vertex integrity}

A \emph{$\vi(k)$-set} $S$ of a graph $G$ is a
set of vertices such that 
the number of vertices of every component of $G-S$
is at most $k-|S|$.
The \emph{vertex integrity} of a graph $G$,
denoted by $\vi(G)$,
is the minimum integer $k$ such that there is a 
$\vi(k)$-set of $G$.
In other words, it can be defined as follows:
\[
  \vi(G) = \min_{S \subseteq V(G)}\left\{|S| + \max_{C \in \mathrm{cc}(G-S)} |V(C)|\right\},
\]
where $\mathrm{cc}(G-S)$ is the set of connected components of $G-S$.
A $\vi(k)$-set of $G$, if any exists, can be found in $O(k^{k+1}n)$ time~\cite{DrangeDH16},
where $n$ is the number of vertices in $G$.

As mentioned above, the concept of vertex integrity was originally introduced in the context of network vulnerability~\cite{BarefootES87},
but recently it and its close relatives are used as structural parameters in algorithmic studies.
The \emph{safe number} was introduced with a similar motivation~\cite{FujitaMS16}
and later shown to be (non-trivially) equivalent to the vertex integrity in the sense
that the safe number is bounded if and only if so is the vertex integrity for every graph~\cite{FujitaF18}.
The definition of \emph{fracture number} is almost the same as the one for vertex integrity,
where the only difference is that it asks the maximum (instead of the sum) of the orders of $S$ and a maximum component of $G-S$ to be bounded by $k$.
The $\ell$-component order connectivity~\cite{DrangeDH16} measures the size of $S$ and the maximum order of a component of $G-S$ separately,
and defined to be the minimum size $k$ of a set $S$ such that each component of $G-S$ has order at most $\ell$.
For example, $1$-component order connectivity is exactly the vertex cover number.
Also, the $2$-component order connectivity is studied as the matching-splittability~\cite{JansenM15}.
A graph has vertex integrity at most $k$
if and only if the graph has $\ell$-component order connectivity at most $k - \ell$ for some $\ell$.

The fracture number was used to design efficient algorithms for \pname{Integer Linear Programming}~\cite{DvorakEGKO17},
\pname{Bounded-Degree Vertex Deletion}~\cite{GanianKO21},
and \pname{Locally Constrained Homomorphism}~\cite{BulteauDKOP22_arxiv}.
The vertex integrity was used in the context of subgraph isomorphism on minor-closed graph classes~\cite{BodlaenderHKKOO20},
and then used to design algorithm for several problems that are hard on graphs of bounded treedepth such as
\pname{Capacitated Dominating Set},
\pname{Capacitated Vertex Cover},
\pname{Equitable Coloring},
\pname{Equitable Connected Partition}\footnote{%
In~\cite{EncisoFGKRS09}, \pname{Equitable Connected Partition} was
shown to be W[1]-hard parameterized 
simultaneously by pathwidth, feedback vertex set number, and the number of parts.
In \cref{sec:w1}, we strengthen the W[1]-hardness by replacing pathwidth in the parameter with treedepth.},
\pname{Imbalance},
\pname{Maximum Common (Induced) Subgraph}, and
\pname{Precoloring Extension}~\cite{GimaHKKO21}.
A faster algorithm for \pname{$\MSO$ Model Checking} parameterized by vertex integrity is also known~\cite{LampisM21}.

\subsection{Monadic second-order logic}
A \emph{monadic second-order logic formula} (an \emph{$\MSO$ formula}, for short)
over $p$-colored graphs is a formula
that matches one of the following,
where $x$ and $y$ denote vertex variables, $X$ denotes a vertex-set variable,
$C_i$ denotes a vertex-set constant (color):
$E(x,y)$; $x = y$; $x \in X$; $x \in C_i$;
$\exists x.\varphi$, $\forall x.\varphi$, $\exists X.\varphi$, $\forall X.\varphi$,
$\varphi \land \psi$, $\varphi \lor \psi$, and $\lnot \varphi$,
where $\varphi$ and $\psi$ are $\MSO$ formulas.
These symbols have the following semantic meaning: $E(x,y)$ means that $x$ and $y$ are adjacent;
and the others are the usual ones.
Additionally, for convenience, we introduce $\MSO$ symbols $\true$ and $\false$ that 
are always interpreted as true and false, respectively.
Note that this version of $\MSO$ is often called $\MSO_1$. 
In Section~\ref{sec:ext}, we consider a variant called $\MSO_2$, 
which has stronger expression power.

A variable is \emph{bound} if it is quantified and \emph{free} otherwise.
An $\MSO$ formula is \emph{closed} if
it has no free variables and \emph{open} otherwise.
We assume that every free variable is a set variable,
because a free vertex variable can be simulated by
a free vertex-set variable with an $\MSO$ formula expressing that the set is of size~1.
An \emph{assignment} of an open $\MSO$ formula $\varphi$ with $s$ free set variables
over $G$ is a tuple $\vect{X}^G =(X^G_1, \ldots, X^G_s)$
of $s$ vertex sets $X^G_i \subseteq V(G)$.
Let $\mathcal{G}$ be a $p$-colored graph, and $\varphi$ be an $\MSO$ formula.
If $\varphi$ is closed, we write $\mathcal{G} \models \varphi$ 
if $\mathcal{G}$ satisfies the property expressed by $\varphi$.
Otherwise, we write $(\mathcal{G}, \vect{X}^G) \models \varphi$ where $\vect{X}^G$ is an
assignment of $\varphi$
if $\mathcal{G}$ and $\vect{X}^G$ satisfies the property expressed by $\varphi$.

From the definition of $\MSO$,
one can see that
no $\MSO$ formula can distinguish 
isomorphic $p$-colored graphs.
See e.g.,~\cite{LampisM21} for a detailed proof.
\begin{lemma}[Folklore]
    Let $\mathcal{G}_1$ and $\mathcal{G}_2$ be isomorphic $p$-colored graphs.
    For every $\MSO$ formula $\varphi$, we have $\mathcal{G}_1 \models \varphi$ if and only if $\mathcal{G}_2 \models \varphi$.
    \label{lem:iso-equiv}
\end{lemma}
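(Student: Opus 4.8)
The plan is to prove a stronger statement by structural induction on $\varphi$, since the closed case by itself gives no usable induction hypothesis once one descends under a quantifier. Concretely, I would track \emph{open} formulas together with assignments and show that the isomorphism $\psi$ transports satisfaction. Extend $\psi$ to vertex sets by $\psi(S) = \{\psi(u) \mid u \in S\}$; since $\psi$ is a bijection, the map $S \mapsto \psi(S)$ is a bijection between the power sets of $V(G_1)$ and $V(G_2)$. Given an assignment $\alpha$ of the free vertex variables and free set variables of $\varphi$ in $\mathcal{G}_1$, let $\psi(\alpha)$ denote its image under $\psi$, applying $\psi$ to each assigned vertex and $\psi(\cdot)$ to each assigned set. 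The claim to prove is that $(\mathcal{G}_1, \alpha) \models \varphi$ if and only if $(\mathcal{G}_2, \psi(\alpha)) \models \varphi$ for every $\varphi$ and every such $\alpha$; the lemma is the special case in which $\varphi$ is closed.

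First I would settle the atomic formulas, which is exactly where the two isomorphism conditions are used. For $E(x,y)$, the adjacency-preserving condition $\{u,v\} \in E(G_1) \iff \{\psi(u),\psi(v)\} \in E(G_2)$ yields the equivalence directly. For $x \in C_i$, the color-preserving condition $\cor_{\cfun_1}(v) = \cor_{\cfun_2}(\psi(v))$ gives $i \in \cor_{\cfun_1}(v) \iff i \in \cor_{\cfun_2}(\psi(v))$, that is, $v$ carries color $i$ in $\mathcal{G}_1$ if and only if $\psi(v)$ carries color $i$ in $\mathcal{G}_2$. The remaining atoms follow from $\psi$ being a bijection: injectivity handles $x = y$, and $v \in S \iff \psi(v) \in \psi(S)$ handles $x \in X$.

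Next, the Boolean connectives $\land$, $\lor$, and $\lnot$ are immediate from the induction hypothesis, since satisfaction commutes with them and $\psi(\alpha)$ is defined componentwise. The quantifiers are the step I expect to carry the real content. For $\exists x.\varphi$, I would argue that $(\mathcal{G}_1, \alpha) \models \exists x.\varphi$ iff there is some $v \in V(G_1)$ with $(\mathcal{G}_1, \alpha[x \mapsto v]) \models \varphi$; by the induction hypothesis this is equivalent to $(\mathcal{G}_2, \psi(\alpha)[x \mapsto \psi(v)]) \models \varphi$; and because $\psi$ is a bijection, $\psi(v)$ ranges over all of $V(G_2)$ as $v$ ranges over $V(G_1)$, so this is precisely $(\mathcal{G}_2, \psi(\alpha)) \models \exists x.\varphi$. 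The set quantifier $\exists X.\varphi$ is handled identically, now invoking that $S \mapsto \psi(S)$ is a bijection between the power sets, so that the witnessing set ranges over all subsets of $V(G_2)$. The universal quantifiers $\forall x$ and $\forall X$ follow the same pattern, or by duality via $\lnot \exists \lnot$.

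There is no genuine obstacle here; the only point requiring care is the discipline of keeping the induction hypothesis strong enough, stated over open formulas and arbitrary assignments, so that the quantifier steps go through. Those steps are exactly where we use that $\psi$ is a \emph{bijection} on vertices, and hence induces a bijection on subsets, rather than merely a structure-preserving map. This completes the induction and establishes the lemma.
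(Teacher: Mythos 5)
Your proof is correct; note that the paper itself gives no proof of Lemma~\ref{lem:iso-equiv}, stating it as folklore that one can see ``from the definition of $\MSO$'' and deferring the details to~\cite{LampisM21}. Your structural induction---strengthened to open formulas with assignments transported through $\psi$, with the atoms handled by the two isomorphism conditions and the quantifier steps by bijectivity of $\psi$ and of the induced map on power sets---is exactly the standard argument that such a detailed proof consists of, so there is nothing in the paper to diverge from.
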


\subsection{Extensions of $\MSO$}
We introduce an extension of $\MSO$ proposed by
Knop, Kouteck\'{y}, Masa\v{r}\'{\i}k, and Toufar~\cite{KnopKMT19}.
Let $\varphi$ be an $\MSO$ formula with $s$ free set variables $X_1, \ldots, X_s$,
and $G$ be a graph with $n$ vertices.

We introduce a linear constraint on the cardinalities of vertex sets $|X_i|$.
A \emph{global linear cardinality constraint} is
an $s$-ary relation $R$ expressed by a linear inequality $a_1|X_1|+ \cdots + a_s|X_s| \leq b$,
where $a_i$ and $b$ are integers and the arguments $X_i$ are the free variables of $\varphi$.
In the extension of $\MSO$ introduced later,
global cardinality constraints are used as atomic formulas.

A \emph{local linear cardinality constraint} of $G$ on $\varphi$ is 
a mapping $\alpha^G_i \colon V(G) \to 2^{[n]}$, where
$\alpha^G_i(v) = [l_i^v, u_i^v]$ with some integers $l_i^v$ and $u_i^v$.
Each $\alpha_{i}^G$ is a constraint on the number of neighbors
of each vertex that are in $X_i$.
We say that an assignment $\vect{X}^G = (X_1^G, \ldots X_s^G)$ 
\emph{obeys} a tuple $\vect{\alpha}^G = (\alpha_{1}^{G}, \ldots, \alpha_{s}^{G})$ of
local linear cardinality constraints
if $|X_i \cap N(v)| \in \alpha_i^G(v)$ for all $v \in V(G)$ and $i \in [s]$.

An \emph{$\MSOGLLin$ formula} on a $p$-colored graph $\mathcal{G} = (G, \cfun)$ is a tuple
$(\varphi, \vect{R}, \vect{\alpha}^G)$
where $\varphi$, $\vect{R}$, and $\vect{\alpha}^G$ are defined as follows.
The tuple
$\vect{R} = (R_1, \ldots, R_g)$ is a tuple of global linear cardinality constraints, and
$\vect{\alpha}^G = (\alpha_1^G, \ldots, \alpha_s^G)$ is a tuple of local linear cardinality constraints.
The formula $\varphi$ is an $\MSO$ formula with $s$ free set variables that 
additionally has the $g$ global linear cardinality constraints $R_i$ as symbols.
Now we write $(\mathcal{G}, \vect{R}, \vect{X}^G) \models \varphi$ if
$(\mathcal{G}, \vect{X}^G) \models \varphi'$ where
$\varphi'$ is an ordinary $\MSO$ formula obtained from $\varphi$
by replacing every symbol $R_i$ with the symbol $\true$ or $\false$ representing
 the truth value of the formula $(|X^G_1|, \ldots ,|X^G_s|)\in R_i$. 

Our problem, \pname{$\MSOGLLin$ Model Checking}, is defined as follows.
\begin{myproblem}
  \problemtitle{\pname{$\MSOGLLin$ Model Checking}}
  \probleminput{A $p$-colored graph $\mathcal{G} = (G, \cfun)$ and an $\MSOGLLin$ formula $(\varphi, \vect{R}, \vect{\alpha}^G)$.}
  \problemquestiontitle{Question}
  \problemquestion{Is there an assignment $\vect{X}^G = (X_1^G, \ldots, X_s^G)$ of $\varphi$ such that 
  $(\mathcal{G}, \vect{R}, \vect{X}^G) \models \varphi$ and $\vect{X}^G$ obeys $\vect{\alpha}^G$?}
\end{myproblem}
It is known that 
\pname{$\MSOGLLin$ Model Checking} is fixed-parameter tractable parameterized by neighborhood diversity~\cite{KnopKMT19},
W[1]-hard parameterized by treedepth and feedback vertex set~\cite{KnopKMT19},
and W[1]-hard parameterized by twin-cover~\cite{KnopMT19}.


\section{Model checking algorithm}
\label{sec:algo}
In this section, we present our main result,
the fixed-parameter algorithm for \pname{$\MSOGLLin$ Model Checking} parameterized by vertex integrity.
Before going into the details, let us sketch the rough and intuitive ideas of the algorithm.
Recall that our goal is to find a tuple of vertex sets in a graph of bounded vertex integrity that satisfies 
\begin{itemize}
  \item an $\MSO$ formula $\varphi$ equipped with global linear cardinality constraints, and
  \item local linear cardinality constraints.
\end{itemize}

We first show that for the ordinary \pname{$\MSO$ Model Checking} with a fixed formula on graphs of bounded vertex integrity,
there is a small number of equivalence classes, called \emph{shapes}, of tuples of vertex sets
such that two tuples of the same shape are equivalent under the formula.
To use the concept of shapes,
we remove the global constraints from $\varphi$
by replacing each of them with a guessed truth value 
and we find a solution that meets the guesses.
Let $\varphi'$ be the resultant (ordinary) $\MSO$ formula.
We guess the shape of the solution and check whether a tuple with the guessed shape satisfies $\varphi'$
using known efficient algorithms.
If the guessed shape passed this test,
then we check whether there is a tuple with the shape satisfying the global and local cardinality constraints.
We can do this by expressing the rest of the problem as an integer linear programming (ILP) formula
as often done for similar problems (see e.g., \cite{KnopKMT19}).
The ILP formula we construct has constraints for forcing a solution to be found
\begin{itemize}
  \item to have the guessed shape,
  \item to satisfy the guessed global cardinality constraints, and
  \item to satisfy the local cardinality constraints.
\end{itemize}
The first two will be straightforward from the definitions given below.
For the local cardinality constraints,
we observe that after guessing the intersections of a $\vi(k)$-set $S$ and each set in the solution,
we know whether all vertices in $V(G) - S$ obeys the local cardinality constraints.
Thus we only need to express the local cardinality constraints in ILP for the vertices in $S$.
Finally, we will observe that the number of variables and constraints in the constructed ILP formula
depends only on $k$ and $|\varphi|$. This will give us the desired result.

\medskip

In the next subsections, we formally describe and prove the ideas explained above.

\subsection{MSO model checking}

Let $\mathcal{G} = (G, \cfun)$ be a $p$-colored graph and $S$ be a subset of $V(G)$.
We define an equivalence relation of the components of $G-S$ as follows.
Two components $A_1$ and $A_2$ of $G-S$ have the same \emph{$(\mathcal{G}, S)$-type}
if there is an isomorphism $\psi$ from $(G[S\cup A_1], \cfun|_{S\cup A_1})$ to
$(G[S\cup A_2], \cfun|_{S\cup A_2})$ 
such that the restriction $\psi|_S$ is the identity function.
We call such an isomorphism a \emph{$(\mathcal{G}, S)$-type isomorphism}.
Clearly, having the same type is an equivalence relation.
We say that a component $A$ of $G - S$ is of
$(\mathcal{G},S)$-type $t$ (or just type $t$) by using a canonical form $t$ of the members of the
$(\mathcal{G},S)$-type equivalence class of $A$. 
Denote by $\tau_{(\mathcal{G},S)}(A)$ the type of a component $A$ of $G-S$.
We will omit the index $(\mathcal{G}, S)$
if it is clear from the context.

We define the canonical form of a $(\mathcal{G},S)$-type
as the ``lexicographically'' smallest one in the equivalence class in some sense
(see \cite{GimaHKKO21} for such canonical forms of uncolored graphs).
If $S$ is a $\vi(k)$-set, then in time depending only on $p+k$
we can compute the canonical form of the equivalence class that 
a component $A$ of $G-S$ belongs to.
Thus we can compute (the canonical forms of) all $(\mathcal{G},S)$-types
in time $f(p+k) |G|^{O(1)}$ for some computable function $f$.
Furthermore, 
in time $f'(p+s+k) |G|^{O(1)}$ for some computable function $f'$,
we can compute all $(\mathcal{G}', S)$-types for all $\mathcal{G}'$ obtained from $\mathcal{G}$ by adding $s$ new colors;
that is, $\mathcal{G}' = (G,  \cfun \tupplus \vect{X})$ for some $\vect{X} \in (V(G))^{s}$.

The next lemma, due to Lampis and Mitsou~\cite{LampisM21}, is one of the main ingredients of our algorithm.
It basically says that in the ordinary \pname{$\MSO$ Model Checking}
we can ignore some part of a graph if it has too many parts that have the same type.
\begin{lemma}[\cite{LampisM21}]
    Let $\mathcal{G} = (G, \cfun)$ be a $p$-colored graph $G$,
    $S\subseteq V$, $A$ be a component of $G-S$, $|A| \le k$,
    and $\varphi$ be a closed $\MSO$ formula with $q$  quantifiers.
    If there are at least $2^{kq}+1$ type $\tau(A)$ components in $G-S$,
    then $(G, \cfun) \models \varphi$ if and only if $(G - V(A), \cfun') \models \varphi $,
    where $\cfun'$ is the restriction of $\cfun$ to $V(G)\setminus V(A)$.
    \label{lem:deletable}
\end{lemma}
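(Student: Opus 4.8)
The plan is to prove the stronger statement that the two $p$-colored graphs $\mathcal{G}_1 = (G,\cfun)$ and $\mathcal{G}_2 = (G - V(A), \cfun')$ satisfy exactly the same closed $\MSO$ sentences with at most $q$ quantifiers; the claimed equivalence for the particular $\varphi$ is then immediate. I would establish this through the $q$-round Ehrenfeucht--Fra\"{\i}ss\'{e} (EF) game for $\MSO$: in each round Spoiler either picks a vertex or picks a vertex set in one of the two structures and Duplicator answers with a vertex, resp.\ a vertex set, in the other, and Duplicator wins iff the induced correspondence between the chosen vertices is a partial isomorphism of colored graphs that also respects membership in the chosen sets. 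By the standard characterization, exhibiting a winning strategy for Duplicator in this game is equivalent to the desired logical indistinguishability, so the whole proof reduces to describing such a strategy.

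First I would record the asymmetry to be overcome. The two structures coincide on the vertex set $V(G) \setminus V(A)$, hence they share $S$, every component of $G - S$ other than $A$, and all of their colors there; the sole difference is that $\mathcal{G}_1$ carries one extra type-$\tau(A)$ component, namely $A$ itself. Because two components of the same $(\mathcal{G},S)$-type are related by an $S$-fixing isomorphism, all $\ge 2^{kq}+1$ type-$\tau(A)$ components of $G - S$ are mutually interchangeable over $S$, and after the deletion there remain at least $2^{kq}$ of them in $\mathcal{G}_2$. Duplicator's strategy is to treat this pool of interchangeable copies as a reservoir: whenever Spoiler probes a so-far-untouched copy on one side, Duplicator answers by activating a matching spare copy on the other side, transporting the move through the $S$-fixing isomorphism.

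Concretely, I would have Duplicator maintain, after each round, the following invariant. (i) The map between the chosen vertices is a partial isomorphism respecting colors, adjacency, and membership in all sets chosen so far, and it acts as the identity on $S$ and on every shared or already-matched component. (ii) There is a matching between the \emph{pinned} type-$\tau(A)$ components (those containing a chosen vertex) on the two sides, each matched pair being identified by an $S$-fixing type-isomorphism under which the chosen vertices and all set-traces agree. (iii) Partition the remaining \emph{free} type-$\tau(A)$ components by the trace the chosen sets induce on them; in every trace class the numbers of free copies on the two sides are either equal or both at least a threshold that depends only on $k$ and the number of remaining rounds and never exceeds $2^{kq}$. A vertex move inside $S$ or inside a shared/pinned component is copied verbatim; a vertex move into a free copy is answered by pulling an unused copy of the same trace from the reservoir (which exists by (iii)); and a set move is answered by mirroring the chosen trace on $S$ and on all pinned copies and then splitting each free trace class between the two sides so as to re-establish (iii).

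The delicate point, and where the bound $2^{kq}$ genuinely enters, is the bookkeeping in clause (iii) that guarantees Duplicator never exhausts the reservoir. The relevant count is the number of distinguishable behaviors of a single type-$\tau(A)$ component during the game: it has at most $k$ vertices, at most $q$ sets are ever chosen, and each vertex is independently in or out of each set, so there are at most $(2^{q})^{k} = 2^{kq}$ possible trace patterns in total. The main work is to verify, by induction on the number of remaining rounds, that each set move (which refines the trace classes) can be answered while preserving the ``equal-or-large-surplus'' condition, starting from the initial surpluses $\ge 2^{kq}+1$ and $\ge 2^{kq}$. This threshold-counting estimate is the only genuinely nonroutine step; once it is in place, checking that the maintained correspondence is a partial isomorphism at the end of the game follows directly from the definitions of $(\mathcal{G},S)$-type and of $\cor_{\cfun}$, and applying the EF-game characterization closes the argument.
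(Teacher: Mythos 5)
The paper itself contains no proof of this lemma: it is imported wholesale from Lampis and Mitsou~\cite{LampisM21} (``The next lemma, due to Lampis and Mitsou\dots''), so there is no in-paper argument to compare yours against. That said, your EF-game proof is, in substance, the same argument as the one given in the cited reference, which is phrased as an induction on the number of quantifiers rather than as a game: a set quantifier there plays the role of your set move, the pigeonhole count over the at most $2^{k}$ traces that one set can induce on a component with at most $k$ vertices plays the role of your clause~(iii), and transporting an assignment between same-type components through the $S$-fixing isomorphism plays the role of your reservoir matching. The two presentations are interchangeable, so your route is not genuinely different, just the game-theoretic rendering of the standard proof.

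For completeness, the one step you defer does go through, with thresholds $T_{r} = 2^{kr}$ for $r$ remaining rounds. A set move splits each trace class into at most $2^{k}$ subclasses. On a class where the two sides' free counts are equal, Duplicator copies Spoiler's traces through any bijection of $S$-fixing isomorphisms, keeping all subclass counts equal. On a class where both counts are at least $T_{r}$, Duplicator matches exactly every subclass in which Spoiler's count is below $T_{r-1}$, allocates $T_{r-1}$ copies to each remaining subclass (total at most $2^{k} T_{r-1} = T_{r}$, which it has), and dumps its surplus into some subclass where Spoiler's count is at least $T_{r-1}$; such a subclass exists, since otherwise Spoiler's total would be below $2^{k} T_{r-1} = T_{r}$. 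A vertex move decrements the two free counts of one class by one each, and $T_{r} - 1 \ge T_{r-1}$, so the invariant survives that as well. Since the initial counts $2^{kq}+1$ and $2^{kq}$ are both at least $T_{q}$, the reservoir is never exhausted, Duplicator wins the $q$-round game, and the lemma follows.
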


Lemma~\ref{lem:deletable} leads to the following concept ``shape'',
which can be seen as equivalence classes of assignments.

\begin{definition}[Shape]
    Let $\mathcal{G} = (G, \cfun)$ be a $p$-colored graph, 
    $S$ be a $\vi(k)$-set of $G$,
    and
    $\varphi$ be an $\MSO$ formula with $s$ free set variables $(X_1, \ldots, X_s)$
    and $q$ quantifiers.
    Let $\mathcal{T}$ be the set of all $(\mathcal{G}, S)$-types,
    and $\mathcal{T}'$ be the set of all possible $(\mathcal{G}', S)$-types
    in $(p+s)$-colored graphs $\mathcal{G}'$ obtained from $\mathcal{G}$ by adding $s$ new colors.

    An $S$-\emph{shape} 
    is the pair $(\sigma_S, \sigma)$ of
    a function $\sigma_S \colon S \to 2^{\{X_1, \ldots, X_s\}} $
    and a function $\sigma \colon \mathcal{T}' \to [0,2^{kq}]\cup \{\top\}$.

    Let $\vect{X}^G = (X_1^G, \ldots, X_s^G)$ be an assignment of $\varphi$,
    and $\mathcal{G}' = (G, \cfun \tupplus \vect{X}^G)$.
    The \emph{$S$-shape} of $\vect{X}^G$ is $(\sigma_S, \sigma)$ if the following conditions are satisfied:
    \begin{itemize}
        \item
        for each $i \in [s]$ and $v \in S$, $X_i \in \sigma_S(v)$ if and only if $v \in X_i$;
        \item
        for each $t' \in \mathcal{T}'$,
        \[
            \sigma(t') = \begin{cases}
                c(t') & c(t') \leq 2^{kq},\\
                \top & \text{otherwise},
            \end{cases}
        \]
        where $c(t')$ is the number of $(\mathcal{G}', S)$-type $t'$ components of $G-S$.
    \end{itemize}
    \label{def:shape}
\end{definition}

Let $(\sigma_S,\sigma)$ be an $S$-shape.
If there is an assignment $\vect{X}^G$ of $\varphi$ such that the $S$-shape of $\vect{X}^G$ is $(\sigma_S, \sigma)$,
we say that the $S$-shape $(\sigma_S, \sigma)$ is \emph{valid}.

The following lemma indicates that $S$-shapes act as a sort of equivalence classes.
\begin{lemma}
    Let $\mathcal{G} = (G, \cfun)$ be a $p$-colored graph,
    $S$ be a $\vi(k)$-set of $G$, and
    $\varphi$ be an $\MSO$ formula with $s$ free set variables $(X_1, \ldots, X_s)$.
    Let $\vect{X}^G $ and $\vect{Y}^G$ be assignments of $\varphi$ such that their shapes are equal.
    Then, $(\mathcal{G}, \vect{X}^G) \models \varphi$ if and only if $(\mathcal{G}, \vect{Y}^G) \models \varphi$.
    \label{lem:shape-equiv}
\end{lemma}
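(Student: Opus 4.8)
The plan is to reduce the statement about open formulas with free set variables to a statement about closed formulas over graphs with more colors, and then to combine \cref{lem:deletable} with the isomorphism invariance of \cref{lem:iso-equiv}.

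First I would encode the free variables as colors. Given the assignment $\vect{X}^G = (X_1^G, \ldots, X_s^G)$, form the $(p+s)$-colored graph $\mathcal{G}'_X = (G, \cfun \tupplus \vect{X}^G)$ and let $\varphi'$ be the closed $\MSO$ formula obtained from $\varphi$ by replacing each free variable $X_i$ with the color constant $C_{p+i}$. This substitution introduces no new quantifiers, so $\varphi'$ still has $q$ quantifiers, and by construction $(\mathcal{G}, \vect{X}^G) \models \varphi$ if and only if $\mathcal{G}'_X \models \varphi'$; the analogous statement holds for $\vect{Y}^G$ with $\mathcal{G}'_Y = (G, \cfun \tupplus \vect{Y}^G)$. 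It therefore suffices to prove $\mathcal{G}'_X \models \varphi'$ if and only if $\mathcal{G}'_Y \models \varphi'$.

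Next I would canonically shrink each augmented graph. Since $S$ is a $\vi(k)$-set, every component $A$ of $G-S$ satisfies $|A| \le k - |S| \le k$, so \cref{lem:deletable} applies to $\mathcal{G}'_X$ with threshold $2^{kq}$. For every $(\mathcal{G}'_X, S)$-type $t'$ whose number of components exceeds $2^{kq}$ (that is, $\sigma(t') = \top$), I repeatedly delete one component of type $t'$: deleting a component leaves $S$ and every other induced subgraph $G[S\cup A']$ untouched, so the types and counts of the remaining components are unchanged, $S$ remains a $\vi(k)$-set, and as long as more than $2^{kq}$ components of type $t'$ survive the hypothesis of \cref{lem:deletable} is still met. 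Iterating, I reach a graph $\hat{\mathcal{G}}_X$ in which each type $t'$ occurs exactly $\min(\sigma(t'), 2^{kq})$ times (reading $\top$ as $2^{kq}$), with $\mathcal{G}'_X \models \varphi'$ if and only if $\hat{\mathcal{G}}_X \models \varphi'$; I define $\hat{\mathcal{G}}_Y$ analogously. The heart of the argument is then to exhibit an isomorphism between $\hat{\mathcal{G}}_X$ and $\hat{\mathcal{G}}_Y$ as $(p+s)$-colored graphs, after which \cref{lem:iso-equiv} closes the proof. Because $\vect{X}^G$ and $\vect{Y}^G$ share the shape $(\sigma_S, \sigma)$, the two reduced graphs agree on $S$ (the colors of each $v \in S$ are fixed by $\cfun|_S$ and by $\sigma_S$, which records exactly membership in each $X_i$), and for every type $t'$ they contain the same number of type-$t'$ components. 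I would build $\psi$ as the identity on $S$ together with, for each type $t'$, an arbitrary pairing of the type-$t'$ components of $\hat{\mathcal{G}}_X$ with those of $\hat{\mathcal{G}}_Y$; on a matched pair $(A,B)$ I extend $\psi$ by any $(\cdot, S)$-type isomorphism $G[S \cup A] \to G[S \cup B]$ fixing $S$ pointwise, which exists precisely because $A$ and $B$ have the same type. Verifying that $\psi$ is a colored-graph isomorphism is routine: edges inside $S$ are preserved since $\psi|_S$ is the identity on the common graph $G$, edges inside a component and between $S$ and a component are preserved by the chosen type isomorphism, there are no edges across distinct components, and all colors—the original $p$ and the $s$ new ones—are preserved on $S$ by equality of $\sigma_S$ and on the components by the type isomorphisms.

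I expect the main obstacle to be the bookkeeping of the reduction step: one must check that deleting components one at a time genuinely preserves both the applicability of \cref{lem:deletable} at every intermediate stage and the multiset of remaining types, so that the final count of each type is determined by the shape alone and is independent of the order of deletions. Once this is settled, the remaining work—pairing equal-count type classes and gluing the local type isomorphisms into one global isomorphism—is mechanical.
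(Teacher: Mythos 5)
Your proof is correct and follows essentially the same route as the paper's: view $\varphi$ as a closed formula over the $(p+s)$-colored graphs $(G, \cfun \tupplus \vect{X}^G)$ and $(G, \cfun \tupplus \vect{Y}^G)$, use \cref{lem:deletable} to trim every type down to at most $2^{kq}$ components, and conclude by exhibiting a shape-induced isomorphism between the two reduced graphs and invoking \cref{lem:iso-equiv}. The only difference is one of detail: you make explicit the iteration of \cref{lem:deletable} (with the check that deletions preserve types, counts, and the $\vi(k)$-set) and the gluing of per-component type isomorphisms, both of which the paper's proof leaves implicit.
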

\begin{proof}
    Let $\mathcal{G}_X = (G, \cfun \tupplus \vect{X}^G)$,
    $\mathcal{G}_Y = (G, \cfun \tupplus \vect{Y}^G)$, and
    $(\sigma_S, \sigma)$ be the $S$-shape of $\vect{X}^G$ (and of $\vect{Y}^{G}$).
    Then $\varphi$ can be seen as a closed $\MSO$ formula
    for the $(p+s)$-colored graphs $\mathcal{G}_X$ and $\mathcal{G}_Y$.
    Thus we can apply Lemma~\ref{lem:deletable} to $\mathcal{G}_X$, $S$, and $\varphi$,
    and obtain a graph $\mathcal{G}'_X = (G'_X, \cfun'_X)$, 
    such that $\mathcal{G}_X \models \varphi$ if and only if $\mathcal{G}'_X \models \varphi$
    and the number of each type $t$ components of $G'_X-S$
    is at most $2^{kq}$, where $q$ is the number of quantifiers in $\varphi$.
    We also obtain a graph $\mathcal{G}'_Y$ in the same way as for $\mathcal{G}'_X$.
    This reduction does not delete any vertex of $S$.
    The number of components for each type $t$ of $G'_X - S$ or $G'_Y - S$
    is $\sigma(t)$ if $\sigma(t) \neq \top$ and $2^{kq}$ if $\sigma(t) = \top$.
    Therefore, there is an isomorphism from $\mathcal{G}'_X$ to $\mathcal{G}'_Y$,
    and thus $\mathcal{G}'_X \models \varphi$ if and only if $\mathcal{G}'_Y\models \varphi$
    by Lemma~\ref{lem:iso-equiv}.
\end{proof}

Now, we estimate the number of candidates for $S$-shapes.
Observe that in Definition~\ref{def:shape},
the number of candidates for $\sigma_S$
depends only on $k$ and $s$.
The size of $\mathcal{T}$
depends only on $k$ and $p$
because it is at most the product of 
the number of $k \times k$ adjacency matrices and 
the number of $p$-color lists for graphs of at most $k$ vertices.
Similarly, the size of $\mathcal{T}'$
depends only on $k$, $p$ and $s$.
Since $\sigma$ is a function from $\mathcal{T}'$ 
to $[0, 2^{kq}]\cup \{\top\}$, the number of candidates for $\sigma$ 
depends only on $k$, $p$, $s$, and $q$.
Thus the number of $S$-shapes depends only on $k$, $p$, $s$, and $q$.
\begin{observation}
    Let $\mathcal{G} = (G, \cfun)$ be a $p$-colored graph with $n$ vertices,
    $S$ be a $\vi(k)$-set of $G$,
    and
    $\varphi$ be an $\MSO$ formula with $s$ free set variables and $q$ quantifiers.
    The number of $S$-shapes depends only on $k$, $p$, $s$ and $q$.
    \label{obs:number-of-shape}
\end{observation}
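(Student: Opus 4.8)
The plan is to count the $S$-shapes $(\sigma_S, \sigma)$ by bounding the number of choices for each coordinate separately and then multiplying, since a shape is simply a pair consisting of one function of the first kind and one of the second, chosen independently. Because the two functions range over finite sets determined by $S$, by $s$, and by $\mathcal{T}'$, it suffices to bound the size of each of these sets in terms of the allowed parameters $k$, $p$, $s$, and $q$.

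For the first coordinate, I would use that $S$ is a $\vi(k)$-set, so $|S| \le k$. As $\sigma_S$ maps $S$ into $2^{\{X_1, \ldots, X_s\}}$, a set of $2^s$ elements, the number of candidates for $\sigma_S$ is at most $(2^s)^{|S|} \le 2^{sk}$, which depends only on $s$ and $k$. For the second coordinate, $\sigma$ maps $\mathcal{T}'$ into $[0, 2^{kq}] \cup \{\top\}$, a set of $2^{kq} + 2$ elements, so the number of candidates for $\sigma$ is at most $(2^{kq} + 2)^{|\mathcal{T}'|}$; it therefore remains to bound $|\mathcal{T}'|$ by a function of $k$, $p$, and $s$. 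Each element of $\mathcal{T}'$ is the $(\mathcal{G}', S)$-type of some component $A$ of $G' - S$, that is, the isomorphism class, with $S$ pinned pointwise, of the $(p+s)$-colored graph induced on $S \cup A$. The key observation is that adding the $s$ new colors does not change the component structure of $G - S$, so $A$ is a component of $G - S$; by the defining property of a $\vi(k)$-set, $|A| \le k - |S|$ and hence $|S \cup A| \le k$. Thus every type in $\mathcal{T}'$ is the isomorphism class of a $(p+s)$-colored graph on at most $k$ vertices, and the number of such classes is at most the number of $k \times k$ adjacency matrices (at most $2^{k^2}$) times the number of ways to assign a subset of the $p+s$ colors to each of at most $k$ vertices (at most $2^{(p+s)k}$), so $|\mathcal{T}'| \le 2^{k^2} \cdot 2^{(p+s)k}$, a function of $k$, $p$, and $s$ alone.

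Multiplying the two bounds, the number of $S$-shapes is at most $2^{sk} \cdot (2^{kq} + 2)^{2^{k^2} \cdot 2^{(p+s)k}}$, which depends only on $k$, $p$, $s$, and $q$, as claimed. I expect this argument to be routine bookkeeping; the only point requiring genuine care is verifying that every colored extension $\mathcal{G}'$ of $\mathcal{G}$ leaves the components of $G - S$ intact, so that the uniform size bound $|S \cup A| \le k$ — and hence the finiteness of $\mathcal{T}'$ independent of $n$ — transfers from $\mathcal{G}$ to every $\mathcal{G}'$.
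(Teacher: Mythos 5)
Your proposal is correct and follows essentially the same route as the paper: bound the candidates for $\sigma_S$ using $|S|\le k$, bound $|\mathcal{T}'|$ by the number of adjacency matrices times the number of color lists on at most $k$ vertices (using that each component together with $S$ has at most $k$ vertices), bound the candidates for $\sigma$ as functions from $\mathcal{T}'$ to $[0,2^{kq}]\cup\{\top\}$, and multiply. The only difference is that you make the numerical bounds and the observation that adding colors preserves the components of $G-S$ explicit, which the paper leaves implicit.
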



\subsection{Pre-evaluating the global constraints}
\label{sec:alg-global}

Recall that in an $\MSOGLLin$ formula, the global cardinality constraints are used as atomic formulas.
Namely, each of them takes the value true or false depending on the cardinalities of the free variables.
To separate these constraints from the model checking process,
the approach of pre-evaluation was used in the previous studies~\cite{GanianO13,KnopKMT19}.

\begin{definition}[Pre-evaluation]
Let $\mathcal{G} = (G, \cfun)$ be a $p$-colored graph,
and $(\varphi, \vect{R}, \vect{\alpha}^G)$ be an $\MSOGLLin$ formula
where $ \vect{R} = (R_1, \ldots, R_g)$.
We call a function 
$\gamma \colon \{R_1, \ldots, R_g\} \to \{\mathrm{\true} , \mathrm{\false}\}$
a \emph{pre-evaluation}.
Denote by $\gamma(\varphi)$ the $\MSO$ formula 
that obtained by mapping each global linear cardinality constraints $R_i$ by $\gamma$.
\end{definition}

Since each global linear cardinality constraint $R_i$ can be represented
by a linear inequality,
so is its complement $\bar{R}_i = [0, n]^{s} \setminus {R_i}$.
Thus, for a pre-evaluation $\gamma$, the integers $x_1, \ldots, x_s \in [0,n]$ that 
satisfies the following conditions can be represented by a 
system of linear inequalities:
\begin{itemize}
    \item If $\gamma(R_i) = \true$, then $(x_1,\ldots, x_s) \in R_i$.
    \item Otherwise, $(x_1, \ldots, x_s) \notin R_i$.
\end{itemize}
Denote by $\vect{R}_{\gamma}(x_1, \dots, x_s)$ this system of linear inequalities.
If assignment $\vect{X}^G = (X^G_1, \dots, X^G_s)$ of $\varphi$
satisfies the system of inequalities $\vect{R}_{\gamma}(|X^G_1|, \ldots, |X^G_s|)$,
we say that $\vect{X}$ \emph{meets} the pre-evaluation $\gamma$.

\subsection{Making the local constraints uniform}
\label{sec:alg-local}
Observe that for a $\vi(k)$-set $S$ of a graph $G$,
a vertex $v$ of a component of $G-S$ has at most $k-1$ neighbors.
In other words, $|N(v)| \in [0,k-1]$ for each vertex $v \in V(G-S)$.
Therefore, $|N(v) \cap X| \in \alpha(v)$ if and only if $|N(v) \cap X| \in \alpha(v) \cap [0, k-1]$
for every combination of $X \subseteq V(G)$, $\alpha \colon V(G) \to [0,n]$, and $v \in V(G-S)$.
Thus, we can reduce the range of local constraints as follows.
\begin{observation}
    Let $\mathcal{G} = (G, \cfun)$ be a $p$-colored graph,
    $S$ be a $\vi(k)$-set of $G$,
    $(\varphi, \vect{R}, \vect{\alpha}^G) $ be an $\MSOGLLin$ formula
    where $\vect{\alpha}^{G} = (\alpha_1, \ldots, \alpha_s)$,
    and $\vect{X}^G$ be an assignment of $\varphi$.
    Denote by $\vect{\beta}^G$ the local constraints obtained from
    $\vect{\alpha}^{G}$ by
    restricting to $\alpha_i^G(v) \cap [0,k-1]$ 
    for each $v \in G-S$ and $i \in [s]$.
    Then, $\vect{X}^G$ obeys $\vect{\alpha}^G$ if and only if $\vect{X}^{G}$ obeys $\vect{\beta}^G$.
    \label{obs:localcard}
\end{observation}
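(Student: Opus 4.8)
The plan is to reduce the whole statement to the single fact that, outside the separator $S$, every vertex has degree at most $k-1$, so that any local cardinality constraint is only ever tested against a value lying in $[0,k-1]$; once that is established, intersecting the allowed-range sets with $[0,k-1]$ cannot change anything.

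First I would make the degree bound precise. Since $S$ is a $\vi(k)$-set, every component $C$ of $G-S$ satisfies $|V(C)| \le k-|S|$. A vertex $v \in V(C)$ is adjacent only to the at most $k-|S|-1$ other vertices of its own component and to the at most $|S|$ vertices of $S$, so $|N(v)| \le (k-|S|-1)+|S| = k-1$. Consequently $|X_i^G \cap N(v)| \le |N(v)| \le k-1$ for every $i \in [s]$ and every $v \in V(G-S)$, regardless of the assignment.

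Second, I would unwind the definition of ``obeys'' and split $V(G)$ into $S$ and $V(G-S)$. By definition, $\vect{X}^G$ obeys $\vect{\alpha}^G$ exactly when $|X_i^G \cap N(v)| \in \alpha_i^G(v)$ for all $v$ and all $i$, and analogously for $\vect{\beta}^G$. For $v \in S$ the two families of constraints coincide by construction, so those conditions are literally identical. For $v \in V(G-S)$, writing $m := |X_i^G \cap N(v)|$, the bound above gives $m \in [0,k-1]$, and for such $m$ we have $m \in \alpha_i^G(v)$ if and only if $m \in \alpha_i^G(v) \cap [0,k-1]$, which is precisely the restricted constraint defining $\vect{\beta}^G$ on $V(G-S)$. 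Combining the two cases yields the claimed equivalence.

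I do not expect a genuine obstacle here: the argument is essentially the one-line degree estimate together with the set-theoretic triviality that intersecting an allowed-range set with $[0,k-1]$ leaves membership unchanged for values already known to lie in $[0,k-1]$. The only point requiring a little care is the bookkeeping between $S$ and $V(G-S)$, since the restriction defining $\vect{\beta}^G$ is applied only to vertices outside $S$; keeping that distinction explicit in the case split is what makes the ``if and only if'' go through cleanly in both directions.
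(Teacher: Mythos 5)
Your proof is correct and takes essentially the same approach as the paper: the paper justifies this observation by exactly the same degree bound, namely that every $v \in V(G-S)$ has $|N(v)| \le k-1$ (since its component has at most $k-|S|$ vertices and it can see at most $|S|$ vertices of $S$), so the tested value $|X_i^G \cap N(v)|$ always lies in $[0,k-1]$ and intersecting $\alpha_i^G(v)$ with $[0,k-1]$ changes nothing. Your explicit case split between $S$ and $V(G)\setminus S$ just spells out the bookkeeping that the paper leaves implicit.
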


\begin{definition}[Uniform]
    Let $\mathcal{G} = (G, \cfun)$ be a $p$-colored graph with $n$ vertices,
    $S$ be a $\vi(k)$-set of $G$,
    and $(\varphi, \vect{R}, \vect{\alpha}^G)$
    be an $\MSOGLLin$ formula where $\vect{\alpha}^G = (\alpha_1, \ldots, \alpha_s)$.
    We say that
    the graph $\mathcal{G}$ is \emph{uniform} on 
    the local constraints $\vect{\alpha}^G$
    if for every pair of components $A_1$ and $A_2$ of $G-S$ with the same $S$-shape,
    there is a $(\mathcal{G}, S)$-type isomorphism $\psi$ from $A_1$ to $A_2$
    such that $\alpha^G_i(v) = \alpha^G_i(\psi(v))$ for all $i \in [s]$ and $v \in V(A_1)$.
\end{definition}
We can obtain a uniform graph $\mathcal{G}'$ on $\vect{\alpha}^G$
from nonuniform graph $\mathcal{G} = (G, \cfun)$ on $\vect{\alpha}^G$
as follows.
For each $i \in [s]$,
assign to every vertex $v \in V(G)\setminus S$ new colors $C^i_{\alpha_i(v)}$
 corresponding to the local constraints $\alpha_i(v)$.
Then we obtain a uniform graph 
$\mathcal{G}' = (G, \vect{C} \tupplus (C^i_B)_{i\in [s], B \subseteq [0, k-1]})$.
The number of new colors added to $\mathcal{G}'$ is at most $sk^2$.

\begin{lemma}
    Let $\mathcal{G} = (G, \cfun)$ be a $p$-colored graph with $n$ vertices,
    $S$ be a $\vi(k)$-set of $G$,
    and $(\varphi, \vect{R}, \vect{\alpha}^G )$ be an $\MSOGLLin$ formula
    where $\vect{\alpha}^G= (\alpha_1, \ldots, \alpha_s)$.
    Assume that the graph $\mathcal{G}$ is \emph{uniform} on 
    the local constraints $\vect{\alpha}^G$.
    Let $\vect{X}^G = (X_1^G, \ldots, X_s^G)$ and $\vect{Y}^G = (Y_1^G, \ldots, Y_s^G)$
    be assignments of $\varphi$ with the same $S$-shape $(\sigma_S, \sigma)$.
    Then for each $i \in [s]$ and $v \in V(G) \setminus S$,
    $|X_i^G \cap N(v)| \in \alpha^G_i(v)$ 
    if and only if
    $|Y_i^G \cap N(v)| \in \alpha^G_i(v)$.
    \label{lem:uniform-equiv}
\end{lemma}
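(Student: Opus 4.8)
The plan is to evaluate $|X_i^G \cap N(v)|$ by splitting $N(v)$ according to the $\vi(k)$-set $S$. Writing $A$ for the component of $G-S$ that contains $v$, every neighbour of $v$ lies either in $S$ or in $A$, so $|X_i^G \cap N(v)| = |X_i^G \cap N(v) \cap S| + |X_i^G \cap N(v) \cap A|$, and likewise for $\vect{Y}^G$. I would then argue that the first summand is governed by the coordinate $\sigma_S$ of the common shape and the second by the coordinate $\sigma$, so that each contribution behaves in a controlled way under the two assignments.

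For the $S$-part I would use that $\vect{X}^G$ and $\vect{Y}^G$ share $\sigma_S$: by the first item of \cref{def:shape} we get $v \in X_i^G \iff X_i \in \sigma_S(v) \iff v \in Y_i^G$ for every $v \in S$, i.e.\ $X_i^G \cap S = Y_i^G \cap S$. Since $N(v) \cap S$ is independent of the assignment, $|X_i^G \cap N(v) \cap S| = |Y_i^G \cap N(v) \cap S|$, so this term matches verbatim. For the interior part I would observe that the augmented type $\tau_{(\mathcal{G}_X, S)}(A)$, with $\mathcal{G}_X = (G, \cfun \tupplus \vect{X}^G)$, records the coloured graph $G[S \cup A]$ up to an $S$-fixing isomorphism; in particular it encodes, for every vertex of $A$, its adjacencies and its membership in each $X_i$, hence the value $|X_i^G \cap N(v) \cap A|$. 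Because $\mathcal{G}$ is uniform, the constraint-encoding colours also pin down $\alpha_i^G$ on $A$, so the type of $A$ together with $\sigma_S$ determines, for each vertex of $A$, whether it obeys its local constraint.

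The key step is to match components of equal augmented type across the two assignments. Since $\vect{X}^G$ and $\vect{Y}^G$ have the same $\sigma$, the (capped) numbers of components of each type $t'$ agree, so $A$ admits a partner $A'$ of the same augmented type under $\mathcal{G}_Y = (G, \cfun \tupplus \vect{Y}^G)$, together with an $S$-fixing type-isomorphism $\psi$ that, by uniformity, satisfies $\alpha_i^G(\psi(u)) = \alpha_i^G(u)$ and $|Y_i^G \cap N(\psi(u)) \cap A'| = |X_i^G \cap N(u) \cap A|$ for all $u \in V(A)$. Combined with the $S$-part identity, this transports the membership $|X_i^G \cap N(v)| \in \alpha_i^G(v)$ to the corresponding statement at $\psi(v)$, which is what drives the biconditional.

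I expect the main obstacle to be precisely this matching. The coordinate $\sigma$ is only a multiset of types, so it does not assign a type to the \emph{fixed} physical component of $v$: the same $A$ may receive different augmented types under $\vect{X}^G$ and under $\vect{Y}^G$, and the equivalence therefore cannot be read off pointwise but must be routed through an equal-type partner as above. A further wrinkle arises for saturated types, where $\sigma(t') = \top$: there the exact multiplicities under the two assignments need not coincide, so a literal type-respecting bijection may fail to exist. To handle this I would fall back on the capping argument behind \cref{lem:deletable}: a saturated type is witnessed by more than $2^{kq}$ components on both sides, which is more than enough to secure an $\alpha$-preserving partner for $A$ in either direction, so the equivalence still closes. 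Assembling the $S$-part identity with these type-and-$\alpha$-preserving matches then yields the claim for every $i \in [s]$ and every $v \in V(G) \setminus S$.
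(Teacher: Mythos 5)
Your proposal is correct and takes essentially the same route as the paper's proof: pair the component of a given vertex with a component of equal augmented type under the other assignment (such a partner exists because the two assignments share $\sigma$), and transport obedience through the $S$-fixing type isomorphism, which preserves neighborhood counts and, by uniformity, the values of $\alpha_i^G$. You are in fact slightly more explicit than the paper on two points it glosses over, namely the existence of a partner in the saturated case $\sigma(t')=\top$ and the observation that the stated per-vertex biconditional must be routed through such a partner rather than read pointwise, which is exactly how the paper's own proof (it assumes all $v \in V(G)\setminus S$ obey under $\vect{X}^G$ and concludes the same for $\vect{Y}^G$) interprets the lemma.
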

\begin{proof}
    By symmetry, it suffices to prove the only-if direction.
    Assume that $|X_i^G \cap N(v)| \in \alpha^G_i(v)$ for each $i \in [s]$ and $v \in V(G) \setminus S$.
    Let $\mathcal{G}_X = (G, \cfun \tupplus \vect{X}^G)$,
    $\mathcal{G}_Y = (G, \cfun \tupplus \vect{Y}^G)$, 
    and $A_Y$ be a component of $G-S$.
    Since the $S$-shape of $\vect{X}^G$ and $\vect{Y}^G$ are the same,
    there is a component $A_X$ of $G-S$ such that the $(\mathcal{G}_X, S)$-type of $A_X$ is 
    equal to the $(\mathcal{G}_Y, S)$-type of $A_Y$.
    Then, there is an isomorphism $\psi$ from $A_Y$ to $A_X$ such that
    $|Y_i^G \cap N(v)| = |X_i^G \cap N(\psi(v))|$
    and $\alpha_i^G(v) = \alpha_i^G(\psi(v))$
    for each $v \in A_Y$ and $i \in [s]$,
    because 
     $\mathcal{G}$ is uniform on $\vect{\alpha}$.
    Therefore, $|Y_i^G \cap N(v)| = |X_i^G \cap N(\psi(v))| \in \alpha^G_i(\psi(v)) = \alpha^G_i(v)$
    for each $v \in V(G)\setminus S$ and $i \in [s]$.
\end{proof}

\subsection{The whole algorithm}
We reduce the feasibility test of global and local constraints 
to the feasibility test of an ILP formula with a small number of variables.
The variant of ILP we consider is formalized as follows.
\begin{myproblem}
    \problemtitle{\textsc{$p$-Variable Integer Linear Programming Feasibility ($p$-ILP)}}
    \probleminput{A matrix $A \in \mathbb{Z}^{m\times p}$ and a vector $\vect{b} \in \mathbb{Z}^m$.}
    \problemquestiontitle{Question}
    \problemquestion{Is there a vector $\vect{x} \in \mathbb{Z}^p$ such that $A \vect{x} \leq \vect{b}$?}
\end{myproblem}
Lenstra~\cite{Lenstra83} showed that $p$-ILP is fixed-parameter tractable parameterized by the number of variables $p$, 
and this algorithmic result was later improved by Kannan~\cite{Kannan87} and by Frank and Tardos~\cite{FrankT87}.
\begin{theorem}[\cite{Lenstra83,Kannan87,FrankT87}]
    \textsc{$p$-ILP} can be solved using $O(p^{2.5p+o(p)}\cdot L)$
    arithmetic operations and space polynomial in $L$, where $L$ is the number of the bits in the input.
    \label{thm:ILP}
\end{theorem}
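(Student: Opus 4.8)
The plan is to prove the theorem by presenting Lenstra's recursive algorithm for integer programming in fixed dimension, together with the refinements of Kannan and of Frank and Tardos, and then bounding its cost. The feasible region is the rational polytope $P = \{x \in \mathbb{R}^p : Ax \le b\}$, and the task is to decide whether $P \cap \mathbb{Z}^p \neq \emptyset$. The core idea is a branching recursion on the dimension $p$: at each node we either certify that $P$ is ``fat'' enough to be guaranteed to contain a lattice point, or we find a direction along which $P$ is ``flat,'' which confines every integer point of $P$ to a bounded number of parallel hyperplanes, on each of which we recurse in dimension $p-1$.

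First I would carry out the geometric preprocessing. After standard reductions that detect emptiness and unboundedness and restrict to the affine hull, one may assume $P$ is bounded and full-dimensional. By the ellipsoid method (or the L\"owner--John theorem) one then computes in polynomial time two concentric ellipsoids $E \subseteq P \subseteq \rho E$ with $\rho = O(p^{3/2})$. Applying the linear transformation $T$ that maps the outer ellipsoid to the unit ball turns the problem into that of deciding whether the transformed body $K = T(P)$ contains a point of the image lattice $\Lambda = T(\mathbb{Z}^p)$. I would then run the LLL basis reduction on $\Lambda$; this yields a short nonzero vector of the dual lattice $\Lambda^{*}$ realizing, up to a factor $2^{O(p)}$, the \emph{lattice width} $w(K,\Lambda) = \min_{v \in \Lambda^{*} \setminus \{0\}} \bigl(\max_{x \in K} \langle v, x\rangle - \min_{x \in K} \langle v, x\rangle\bigr)$, and it supplies the rounding needed to enumerate lattice hyperplanes efficiently.

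The decisive ingredient is the flatness theorem from the geometry of numbers: there is a constant $c_p$, depending only on the dimension, such that every convex body $K$ with $w(K,\Lambda) > c_p$ contains a point of $\Lambda$; Banaszczyk's transference bounds let one take $c_p = O(p^{3/2})$. Thus, after the transformation, either the approximate width certified by LLL exceeds $c_p$ and we return ``feasible,'' or the short dual vector $v$ sandwiches $K$ between two hyperplanes $\langle v, x\rangle = \ell$ and $\langle v, x\rangle = u$ with $u - \ell = O(c_p)$. Every lattice point of $K$ then satisfies $\langle v, x\rangle = t$ for one of at most $O(c_p)$ integers $t$, and for each such $t$ we intersect $K$ (equivalently $P$) with that hyperplane and recurse on the resulting $(p-1)$-dimensional integer feasibility problem.

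Finally I would analyze the cost. The recursion tree has depth $p$ and branching factor $O(c_p) = O(p^{3/2})$, so it has $p^{O(p)}$ nodes, and at each node the ellipsoid rounding and the LLL call cost a polynomial number of arithmetic operations; a careful accounting of the branching together with the LLL calls yields the stated exponent $p^{2.5p + o(p)}$. To make the dependence on the input size merely linear in $L$ and keep the space polynomial in $L$, I would apply the Frank--Tardos preprocessing, which uses simultaneous Diophantine approximation to replace $A$ and $b$ by integer data of bit length $p^{O(p)}$ defining the same feasibility instance, after which all arithmetic is on numbers of controlled size. The hard part is the flatness theorem itself: establishing it with a dimension constant as small as $O(p^{3/2})$ relies on nontrivial transference inequalities between a lattice and its dual, and turning the existential width bound into an algorithm requires the LLL approximation to the shortest dual vector to be good enough that the number of hyperplanes branched on stays within $p^{O(p)}$.
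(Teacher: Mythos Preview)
The paper does not prove this theorem at all: it is quoted as a black-box result from \cite{Lenstra83,Kannan87,FrankT87} and used only as a tool inside the proof of Lemma~\ref{lem:shape-eval}. So there is nothing in the paper to compare your argument against; any correct pointer to the cited literature would have sufficed.

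That said, your sketch has a real gap if taken as a self-contained derivation of the bound $p^{2.5p+o(p)}$. You rely on LLL to approximate the shortest dual vector and then say ``a careful accounting of the branching together with the LLL calls yields the stated exponent.'' It does not. LLL only gives a $2^{O(p)}$-approximation to the shortest vector, so the number of hyperplanes you branch on at each level is $2^{O(p)}$ times the flatness constant, and over $p$ levels the recursion tree blows up to $2^{O(p^{2})}$ nodes---this is essentially Lenstra's original bound, not Kannan's. Kannan's improvement replaces LLL by a much stronger reduction (Korkine--Zolotarev / HKZ-type enumeration) that finds an exact or near-exact shortest vector in time $p^{O(p)}$ per call; only with that ingredient does the branching factor drop to the flatness constant itself and the total come out to $p^{2.5p+o(p)}$. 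Your use of the Frank--Tardos preprocessing to get the linear dependence on $L$ is fine, but the core recursion as you wrote it does not reach the stated exponent.
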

The next technical lemma is the main tool for our algorithm.

\begin{lemma}
    Let $\mathcal{G} = (G, \cfun)$ be a $p$-colored graph,
    $S$ be a $\vi(k)$-set of $G$,
    and $(\varphi, \vect{R}, \vect{\alpha}^G)$ be an $\MSOGLLin$ formula
    where $\varphi$ has $s$ free set variables $X_1, \ldots X_s$,
    $\vect{R}= (R_1, \ldots, R_g)$, and $\vect{\alpha}^G = (\alpha_1, \ldots, \alpha_s)$.
    Assume that $\mathcal{G}$ is uniform on $\vect{\alpha}^G$.
    Then, there is an algorithm that given a valid $S$-shape $(\sigma_S, \sigma)$,
    decides whether there exists an assignment $\vect{X}^G = (X_1^G, \ldots, X_s^G)$
    such that its $S$-shape is $(\sigma_S, \sigma)$,
    $(\mathcal{G},\vect{X}^G, \vect{R}) \models \varphi$, 
    and $\vect{X}^G$ obeys $\vect{\alpha}^G$
    in time $f(k, |\varphi|) n^{O(1)}$ for some computable function $f$.
    \label{lem:shape-eval}
\end{lemma}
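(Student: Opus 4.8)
The plan is to separate the three requirements---satisfaction of the ordinary $\MSO$ part, the global linear constraints, and the local linear constraints---by first \emph{guessing} the truth values of the global constraints and then expressing what remains as a single ILP instance with few variables. Concretely, I would enumerate all pre-evaluations $\gamma \colon \{R_1, \ldots, R_g\} \to \{\true, \false\}$; there are at most $2^{g} \le 2^{|\varphi|}$ of them. Fix one such $\gamma$. Since $\gamma(\varphi)$ is an ordinary $\MSO$ formula (closed, once the $s$ free variables are read as $s$ extra colors), Lemma~\ref{lem:shape-equiv} tells us that whether an assignment of shape $(\sigma_S,\sigma)$ satisfies $\gamma(\varphi)$ depends only on $(\sigma_S,\sigma)$. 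I would therefore build a canonical bounded-size representative $\mathcal{H}$ of the shape---taking $\sigma(t')$ copies of each enriched type $t'$ with $\sigma(t') \neq \top$ and $2^{kq}$ copies when $\sigma(t') = \top$, exactly as in the proof of Lemma~\ref{lem:shape-equiv}---and decide $\mathcal{H} \models \gamma(\varphi)$ by brute force in time bounded by a function of $k$ and $|\varphi|$. If this fails, the pre-evaluation $\gamma$ is discarded; otherwise I proceed to the feasibility test below. Because a genuine solution realizes exactly one pre-evaluation (the true values of its global constraints on the $|X_i^G|$), and because meeting $\gamma$ together with $\mathcal{H} \models \gamma(\varphi)$ forces $(\mathcal{G},\vect{X}^G,\vect{R}) \models \varphi$, ranging over all $\gamma$ loses nothing.

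For the surviving $\gamma$, I would set up an ILP whose variables are $z_{t'}$, one for each enriched type $t' \in \mathcal{T}'$, intended to count how many components of $G-S$ receive type $t'$; by Observation~\ref{obs:number-of-shape} the number of such variables is bounded in terms of $k$, $p$, and $s$. The shape is encoded by fixing $z_{t'} = \sigma(t')$ whenever $\sigma(t') \neq \top$ and demanding $z_{t'} \ge 2^{kq}+1$ when $\sigma(t') = \top$, together with the consistency constraints $\sum_{t' \colon \operatorname{base}(t')=t} z_{t'} = c_t$ for each base type $t \in \mathcal{T}$, where $c_t$ is the number of base-type-$t$ components of $G$ and $\operatorname{base}(t')$ forgets the $s$ solution colors; since base-type-$t$ components are freely and independently assignable to any enriched type extending $t$, these equalities capture exactly the realizable shapes. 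The global constraints are incorporated by noting that $|X_i^G| = |X_i^G \cap S| + \sum_{t'} n_i(t')\, z_{t'}$, where $|X_i^G \cap S|$ is the constant fixed by $\sigma_S$ and $n_i(t')$ is the number of $X_i$-vertices inside a single type-$t'$ component; substituting these linear forms into $\vect{R}_{\gamma}(|X_1^G|, \ldots, |X_s^G|)$ yields linear inequalities in the $z_{t'}$. The local constraints for the vertices of $V(G) \setminus S$ need no variables: by Lemma~\ref{lem:uniform-equiv} (using uniformity) their obedience is determined by the shape, so I would simply force $z_{t'} = 0$ for every enriched type $t'$ that violates a non-$S$ local constraint.

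The delicate part, and the main obstacle, is the local constraints at the vertices of $S$, since such a vertex may have neighbors spread over arbitrarily many components. The key observation is that each enriched type $t'$ is an isomorphism type of $G[S \cup A]$ together with the solution colors, under isomorphisms fixing $S$ pointwise; consequently the quantity $|X_i^G \cap N(v) \cap A|$ for a fixed $v \in S$ is the same for every component $A$ of type $t'$---call it $d_i(v,t')$. Hence $|X_i^G \cap N(v)| = |X_i^G \cap N(v) \cap S| + \sum_{t'} d_i(v,t')\, z_{t'}$ is again a linear form in the $z_{t'}$, and the membership $|X_i^G \cap N(v)| \in [l_i^v, u_i^v]$ becomes two linear inequalities; there are at most $2s|S| \le 2sk$ of them. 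Together with $z_{t'} \ge 0$, all requirements are now linear in the $z_{t'}$, so the existence of an assignment of shape $(\sigma_S,\sigma)$ that meets $\gamma$ and obeys all local constraints is precisely feasibility of this ILP. The number of variables is bounded by a function of $k$ and $|\varphi|$ and the bit length $L$ of the instance is $n^{O(1)}$, so Theorem~\ref{thm:ILP} solves each ILP in time $f(k,|\varphi|)\, n^{O(1)}$; multiplying by the at most $2^{|\varphi|}$ pre-evaluations gives the claimed running time.
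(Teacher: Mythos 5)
Your proposal is correct and takes essentially the same route as the paper's own proof: guess a pre-evaluation $\gamma$ of the global constraints, use the shape-equivalence of Lemma~\ref{lem:shape-equiv} to settle the $\MSO$ part, use uniformity via Lemma~\ref{lem:uniform-equiv} for the local constraints off $S$, and encode the shape, the pre-evaluated global constraints, and the $S$-local constraints as an ILP over enriched-type-count variables solved by Theorem~\ref{thm:ILP}. The only deviations are implementation-level and harmless: you model-check $\gamma(\varphi)$ on a bounded-size canonical representative of the shape rather than on $\mathcal{G}$ with an arbitrary shape-realizing assignment, and you fold the off-$S$ local-constraint test into the ILP (forcing $z_{t'}=0$ for violating types) instead of performing it as the paper's separate Step~2.
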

\begin{proof}

    Our task is to find an assignment $\vect{X}^G$
    such that 
    \begin{enumerate}
        \item the $S$-shape of $\vect{X}^G$ is $(\sigma_S, \sigma)$, \label{itm:shape}
        \item $(\mathcal{G},\vect{X}^G, \vect{R}) \models \varphi$, and \label{itm:global}
        \item $|X_i^G \cap N(v)| \in \alpha_i^G(v)$ for all $v \in V(G)$ and $i \in [s]$.\label{itm:local}
    \end{enumerate}
    Condition~\ref{itm:shape} can be handled easily by linear inequalities in our ILP formulation.
    Condition~\ref{itm:global} is equivalent to the condition that
    there exists a pre-evaluation~$\gamma$ 
    such that $(\mathcal{G},\vect{X}^G) \models \gamma(\varphi)$,
    and $\vect{X}^{G}$ meets $\gamma$.
    We check whether $(\mathcal{G},\vect{X}^G) \models \gamma(\varphi)$ 
    and whether $\vect{X}^{G}$ meets $\gamma$ separately.
    Furthermore, Condition~\ref{itm:local} is checked separately for vertices in $S$ 
    and for vertices in $V(G) \setminus S$.

    \paragraph*{Step 1. Guessing and evaluating a pre-evaluation for the global constraints.}
    We guess a pre-evaluation $\gamma$
    from $2^g \leq 2^{|\varphi|}$ candidates.
    We check whether each shape-$(\sigma_S, \sigma)$ assignment $\vect{X}^G$
    satisfies the $\MSO$ formula $\gamma(\varphi)$,
    i.e., $(\mathcal{G}, \vect{X}^G) \models \gamma(\varphi)$.
    By Lemma~\ref{lem:shape-equiv},
    we only need to check whether $(\mathcal{G}, \vect{X}^G) \models \gamma(\varphi)$
    for an arbitrary assignment $\vect{X}^G$ of $S$-shape $(\sigma_S, \sigma)$.
    This can be done in $f(k, |\varphi|) n^{O(1)}$ time~\cite{Courcelle90mso1,LampisM21}.
    Note that even if $(\mathcal{G}, \vect{X}^G) \models \gamma(\varphi)$ is true, 
    this arbitrarily chosen $\vect{X}^G$ may not meet $\gamma$.
    In Step~3, we find a shape-$(\sigma_S, \sigma)$ assignment that meets $\gamma$. 

    \paragraph*{Step 2. Checking the local constraints for the vertices in $V(G) \setminus S$.}
    By Lemma~\ref{lem:uniform-equiv}, 
    we can check whether all shape-$(\sigma_S, \sigma)$ assignments 
    satisfy the local constraints for the vertices in $V(G) \setminus S$
    by constructing an arbitrary assignment $\vect{Y}^G$ of $S$-shape $(\sigma_S, \sigma)$
    and testing whether
    $|Y_i^G \cap N(v)| \in \alpha_i$ for all $v \in V(G) \setminus S$ and $i\in [s]$.
    Since constructing an assignment $\vect{Y}^G$ can be done in $f(k, |\varphi|)n^{O(1)}$ time,
     this test can be done in $f(k, |\varphi|)n^{O(1)}$ time.
    
    \paragraph*{Step 3. Constructing a system of linear inequalities for the remaining constraints.}
    By Steps~1 and 2, it suffices to check whether there exists an assignment 
    $\vect{X}^G = (X_1^G, \dots, X_s^G)$ that satisfies the following conditions:
    \begin{enumerate}
        \item the $S$-shape of $\vect{X}^G$ is $(\sigma_S, \sigma)$, \label{itm2:shape}
        \item $\vect{X}^G$ meets the pre-evaluation $\gamma$, \label{itm2:pre} and
        \item $\vect{X}^G$ obeys the local constraints $\vect{\alpha}^G$ for the vertices in $S$.
    \end{enumerate}
    To this end,
    we construct a system of linear inequalities as follows.

    In the following, we denote by $\mathcal{G}'$ the $(p+s)$-colored graph $(G, \cfun \tupplus \vect{X}^G)$,
    where $\vect{X}^G$ is a hypothetical solution we are searching for.

    Let $\mathcal{T}$ be the set of all $(\mathcal{G}, S)$-types.
    For every $t \in \mathcal{T}$, 
    the number of type-$t$ components of $G-S$ is denoted by $n_t$.
    Let $\mathcal{T}'$ be the set of all possible $(\mathcal{H}, S)$-types in $(p+s)$-colored graphs
    $\mathcal{H}$ obtained from $\mathcal{G}$ by adding $s$ new colors.
    Observe that $\mathcal{T}'$ is a superset of the set of all $(\mathcal{G}', S)$-types,
    no matter how $\vect{X}^G$ is chosen.
    \newcommand{\invt}[1]{#1|_p}
    For every $t' \in \mathcal{T}'$,
    the $(\mathcal{G}, S)$-type of a type-$t'$ component is uniquely determined 
    and is denoted by $\invt{t'}$.
    (This notation comes from the fact that the $(\mathcal{G}, S)$-type of a type-$t'$ component 
    can be determined by considering the first $p$-colors.)
    For every $t' \in \mathcal{T}'$,
    we introduce the variable $x_{t'}$
    that represents the number of $(\mathcal{G}', S)$-type $t'$ components.
    The condition that the variables $x_{t'}$ agree with $\sigma$ can be expressed as follows:
    \begin{align*}
        \sum_{t' \in \mathcal{T}',\; \invt{t'}= t} x_{t'} &= n_t & &\text{for every } t \in \mathcal{T},\\
        x_{t'}  &= \sigma(t') & &\text{for every } t' \in \mathcal{T}'\ \text{such that}\ \sigma(t') \neq \top, \\
        x_{t'}  &\geq 2^{kq}+1 & &\text{for every } t' \in \mathcal{T}'\ \text{such that}\ \sigma(t') =\top.
    \end{align*}
    For every $i \in [s]$,
    we introduce an auxiliary variable $y_i$ that represents the size of the set $X^G_i$,
    which is determined by the variables $x_{t'}$.
    The variables $y_i$ can be expressed as follows:
    \begin{align*}
        y_i &= |\{v \in S \mid X_i \in \sigma_S(v)\}| + \sum_{t' \in \mathcal{T}'} x_{t'} \cdot \#_i (x_{t'}) & & \text{for every } i \in [s],
    \end{align*}
    where $\#_i (x_{t'})$ is the number of vertices with color $p+i$ in a type-$t'$ component, i.e., 
    the number of vertices assigned to the variable $X_i$ in a type-$t'$ component.
    Then, as mentioned in Section~\ref{sec:alg-global},
    the global constraints that match the pre-evaluation $\gamma$ can be represented by the system of inequalities $R_{\gamma}(y_1, \ldots, y_s)$.

    Finally, we formulate the local constraints for the vertices in $S$ into a system of inequalities.
    For every $v \in S$, $i \in [s]$, and $t' \in \mathcal{T}'$,
    the number of neighbors of $v$ with color $p+i$ (i.e., in the set variable $X_i$) 
     in a type-$t'$ component is denoted by $d_{i,t'}(v)$
    (i.e., $d_{i,t'}(u) = |N(u) \cap X_i \cap V(A)|$ where $A$ is a type-$t'$ component).
    All constants $d_{i,t'}(v)$ can be computed in $f(k, |\varphi|)n^{O(1)}$ time.
    For every $i \in [s]$ and $v \in S$, we introduce an auxiliary variable $z_{v, i}$ that represents 
    the number of neighbors of $v$ in the set $X_i$, which is determined by the variables $x_{t'}$.
    The variables $z_{v, i}$ can be expressed as follows:
    \begin{equation*}
        z_{v,i} = |\{u \in N(v) \cap S \mid X_i \in \sigma_S(u) \} | + \sum_{t' \in \mathcal{T}'} d_{i,t'}(v) x_{t'}
        \quad \text{for every } v\in S,\ i\in[s].
    \end{equation*}
    Since the local constraints $\alpha_i^G$ can be expressed by $\alpha_i^G(v) = [l_i^v, u_i^v]$ with some integers
    $l_i^v$ and $u_i^v$ for every vertex $v$,
    the local constraints for vertices in $S$ can be expressed as follows:
    \begin{align*}
        l_i^v \leq z_{v,i}  \leq u_i^v & & \text{for every } v \in S, i \in [s].
    \end{align*}

    By finding a feasible solution to the ILP formula constructed above,
    we can find a desired assignment $\vect{X}^G$.
    Since the number of the variables in the ILP formula depends only on $k$ and $|\varphi|$, the lemma follows
    by Theorem~\ref{thm:ILP}.
\end{proof}

\begin{theorem}
    \textsc{$\MSOGLLin$ Model Checking} is fixed-parameter tractable parameterized by 
    $\vi(G)$ and $|\varphi|$.
    \label{thm:msogl-alg}
\end{theorem}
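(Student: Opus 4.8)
The plan is to assemble the machinery developed above into a single algorithm. Given a $p$-colored graph $\mathcal{G} = (G, \cfun)$ and an $\MSOGLLin$ formula $(\varphi, \vect{R}, \vect{\alpha}^G)$, I would first compute a $\vi(k)$-set $S$ of $G$ in $O(k^{k+1}n)$ time~\cite{DrangeDH16}. Since color constants that do not occur in $\varphi$ cannot affect the truth value of $\varphi$, we may merge or drop them and assume $p \le |\varphi|$; likewise the number $s$ of free set variables and the number $q$ of quantifiers satisfy $s, q \le |\varphi|$. Consequently every quantity that the earlier statements bound in terms of $k$, $p$, $s$, and $q$ is in fact bounded by a function of $k$ and $|\varphi|$.

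Next, I would preprocess the instance so that the hypothesis of \cref{lem:shape-eval} holds. By \cref{obs:localcard}, replacing each local constraint $\alpha_i^G(v)$ for $v \in V(G) \setminus S$ with $\alpha_i^G(v) \cap [0,k-1]$ does not change which assignments obey $\vect{\alpha}^G$. Then, as described before \cref{lem:uniform-equiv}, I would attach to each vertex of $V(G) \setminus S$ one of at most $sk^2$ new colors encoding its reduced local constraints, turning $\mathcal{G}$ into a graph that is uniform on $\vect{\alpha}^G$. This adds only a number of colors bounded by a function of $k$ and $|\varphi|$ and does not affect the answer.

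With a uniform instance in hand, I would enumerate all candidate $S$-shapes $(\sigma_S, \sigma)$. By \cref{obs:number-of-shape} their number depends only on $k$, $p$, $s$, and $q$, hence only on $k$ and $|\varphi|$, so this enumeration is fixed-parameter tractable. For each candidate shape I would invoke \cref{lem:shape-eval}, which decides in time $f(k, |\varphi|)\,n^{O(1)}$ whether there is an assignment $\vect{X}^G$ of that shape with $(\mathcal{G}, \vect{X}^G, \vect{R}) \models \varphi$ that obeys $\vect{\alpha}^G$. The algorithm answers ``yes'' exactly when some candidate shape yields such an assignment. Correctness follows because every assignment realizes some shape, while \cref{lem:shape-equiv} and \cref{lem:uniform-equiv} guarantee that satisfaction of $\varphi$ and obedience to the local constraints depend only on the shape, so testing a single representative per shape is sufficient.

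I expect the only delicate point, rather than a genuine obstacle, to concern validity: \cref{lem:shape-eval} is phrased for valid shapes, whereas the enumeration produces all syntactically possible ones. This is harmless, since the shape-agreement inequalities in the ILP of \cref{lem:shape-eval} are infeasible precisely when no assignment realizes the shape, so invalid shapes are silently rejected; alternatively one tests validity first through the same inequalities. Multiplying the number of shapes, a function of $k$ and $|\varphi|$, by the per-shape running time $f(k, |\varphi|)\,n^{O(1)}$ and adding the preprocessing cost yields an overall bound of the form $g(k, |\varphi|)\,n^{O(1)}$, which establishes fixed-parameter tractability parameterized by $\vi(G)$ and $|\varphi|$.
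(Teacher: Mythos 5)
Your proposal is correct and follows essentially the same route as the paper: compute a $\vi(k)$-set, make the instance uniform on the local constraints, enumerate the boundedly many candidate $S$-shapes, and invoke \cref{lem:shape-eval} once per shape. The paper resolves the validity issue exactly as your fallback suggests—checking each guessed shape for consistency with the per-type component counts before calling the lemma—which is the cleaner choice, since Steps 1 and 2 of the algorithm in \cref{lem:shape-eval} must construct an assignment of the given shape and are therefore not well-defined for invalid shapes, so one should not rely on the ILP alone to reject them.
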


\begin{proof}
    Let $k=\vi(G)$. Let $S$ be a $\vi(k)$-set.
    Such a set can be found in $O(k^{k+1}n)$ time~\cite{DrangeDH16}.
    We construct a uniform graph $\mathcal{H} = (G, \cfun')$ on $\vect{\alpha}^G$
     from the input graph $\mathcal{G} = (G, \cfun)$ 
    as described in Section~\ref{sec:alg-local}.
    Here, the number of colors of $\mathcal{H}$ depends only on $k$, $p$, and $s$.
    We compute the $(\mathcal{H}, S)$-types of the components of $G-S$
    and count the number of $(\mathcal{H}, S)$-type $t$ components for each $t$.
    This can be done in $f(k, |\varphi|)n$ time with some computable function~$f$.

    We guess an $S$-shape $(\sigma_S, \sigma)$ of an assignment of the input formula $\varphi$.
    By Observation~\ref{obs:number-of-shape}, 
    the number of candidates for $(\sigma_S, \sigma)$ depends only on $k$, $p$, and $s$.
    We check whether the guess shape $(\sigma_S, \sigma)$ is valid.
    This can be done by checking whether $(\sigma_S, \sigma)$ is consistent with 
    the number of components of all $(\mathcal{H}, S)$-types.
    Hence, this can be done in $f(k, |\varphi|)n$ time with some computable function $f$.

    By Lemma~\ref{lem:shape-eval} with the graph $\mathcal{H}$, $\vi(k)$-set $S$,
    and the input $\MSOGLLin$ formula $(\varphi, \vect{R}, \alpha^G)$,
    the theorem follows.
\end{proof}


\section{Extension to $\MSO_{2}$}
\label{sec:ext}


In this section, we consider an extension of $\MSOGLLin$ to $\MSO_{2}$.
The $\MSO_{2}$ logic\footnote{%
The $\MSO_{2}$ logic is also known as the $\GSO$ logic, which stands for guarded second-order logic.} 
on graphs is a generalization of $\MSO$ ($=\MSO_{1}$) that additionally allows edge variables, edge-set variables,
and an atomic formula $I(x,y)$ meaning that the edge assigned to $y$ is incident to the vertex assigned to  $x$.
It is known that $\MSO_{2}$ is strictly stronger than $\MSO_{1}$ for general graphs
in the sense that there are some properties that can be expressed in $\MSO_{2}$ but not in $\MSO_{1}$
(e.g., Hamiltonicity~\cite{CourcelleE12}).
On the other hand, for graphs of bounded treewidth,
the model checking problem for $\MSO_{2}$ can be reduced to the one for $\MSO_{1}$ in polynomial time (see e.g.,~\cite{CourcelleE12}).
Using a similar reduction, 
we show that the same holds for $\MSOGLLin$ on graphs of bounded vertex integrity.

Now we define the extension of $\MSOGLLin$ with $\MSO_{2}$, which we call $\GSOGLLin$.
In $\GSOGLLin$, 
the local cardinality constraints for vertex-set variables
and the global cardinality constraints work in the same way as in $\MSOGLLin$.
The local cardinality constraints for an edge-set variable $X$ at a vertex $v$
restricts the number of edges in $X$ incident to $v$.
A \emph{$\GSOGLLin$ formula} on a $p$-colored graph $\mathcal{G} = (G, \cfun)$ is a tuple $(\varphi, \vect{R}, \vect{\alpha}^G)$,
where $\vect{R} = (R_1, \ldots, R_g)$ and $\vect{\alpha}^G = (\alpha_1^G, \ldots, \alpha_s^G)$ are the global and local cardinality constraints,
and $\varphi$ is an $\MSO_{2}$ formula with $s$ free set variables that 
additionally equipped with symbols $R_{1}, \dots, R_{g}$.
The problem \pname{$\GSOGLLin$ Model Checking} is formalized as follows.
\begin{myproblem}
  \problemtitle{\pname{$\GSOGLLin$ Model Checking}}
  \probleminput{A $p$-colored graph $\mathcal{G} = (G, \cfun)$, and a $\GSOGLLin$ formula $(\varphi, \vect{R}, \vect{\alpha}^G)$.}
  \problemquestiontitle{Question}
  \problemquestion{Is there an assignment $\vect{X}^G = (X_1^G, \ldots, X_s^G)$ of $\varphi$ such that 
  $(\mathcal{G}, \vect{R}, \vect{X}^G) \models \varphi$ and $\vect{X}^G$ obeys $\vect{\alpha}^G$?}
\end{myproblem}

In the rest of this section, we show the following theorem.
\begin{theorem}
\label{thm:gsogl-alg} 
\pname{$\GSOGLLin$ Model Checking} is fixed-parameter tractable parameterized by 
$\vi(G)$ and $|\varphi|$.
\end{theorem}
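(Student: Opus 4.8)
The plan is to reduce \pname{$\GSOGLLin$ Model Checking} on a graph $G$ of bounded vertex integrity to \pname{$\MSOGLLin$ Model Checking} on an associated colored graph of bounded vertex integrity, and then invoke \cref{thm:msogl-alg}. The associated graph $\mathcal{G}^{*} = (G^{*}, \cfun^{*})$ is the \emph{augmented incidence graph}: its vertex set is $V(G) \cup E(G)$, where each edge $e = \{u,v\}$ of $G$ becomes a new vertex adjacent to (the copies of) $u$ and $v$, and crucially we \emph{keep} the original edges of $G$ among the vertex-copies. Two new color classes mark the vertex-copies and the edge-copies respectively, and the original $p$ colors are carried over to the vertex-copies, so that an $\MSO$ formula can tell the two kinds of vertices apart.

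First I would bound the vertex integrity of $\mathcal{G}^{*}$. Let $S$ be a $\vi(k)$-set of $G$ and let $S^{*} \subseteq V(G^{*})$ be the set of vertex-copies of $S$. Each component $C$ of $G-S$ has at most $k-|S| \le k$ vertices and lies in a single component of $G^{*}-S^{*}$ together with the edge-copies of the edges inside $C$ (at most $\binom{k}{2}$ of them) and the edge-copies of the edges from $C$ to $S$ (at most $k|S| \le k^2$ of them); every edge-copy of an edge inside $S$ becomes an isolated vertex. Hence each component of $G^{*}-S^{*}$ has $O(k^2)$ vertices, giving $\vi(G^{*}) \le |S| + O(k^2) = O(k^2)$.

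Next I would translate the formula and the constraints. The $\MSO_{2}$ formula $\varphi$ is rewritten into an ordinary $\MSO_{1}$ formula $\varphi^{*}$ on $\mathcal{G}^{*}$ by the standard incidence encoding: every vertex/vertex-set quantifier is relativized to the vertex-color, every edge/edge-set quantifier is relativized to the edge-color, the atom $E(x,y)$ becomes ``$x$ and $y$ are adjacent vertex-copies,'' and the incidence atom $I(x,y)$ becomes ``$x$ is a vertex-copy adjacent to the edge-copy $y$''; both reduce to adjacency in $G^{*}$ distinguished by the color predicates. All free set variables of $\varphi$ (whether vertex-set or edge-set) thus become vertex-set variables of $\varphi^{*}$, as required by $\MSOGLLin$. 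Since the correspondence is a bijection on each set, $|X_i|$ is unchanged, so every global linear cardinality constraint carries over verbatim.

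The main obstacle is the faithful translation of the local cardinality constraints, and this is exactly where keeping the original edges pays off. For a vertex-set variable $X_i$, its copy in $G^{*}$ contains only vertex-copies, so $N_{G^{*}}(v) \cap X_i = N_{G}(v) \cap X_i$ for every vertex-copy $v$ (the edge-copy neighbors of $v$ are not in $X_i$); for an edge-set variable $X_j$, its copy contains only edge-copies, so $N_{G^{*}}(v) \cap X_j$ is precisely the set of edges of $X_j$ incident to $v$. In both cases the original local constraint at $v$ becomes the $\MSOGLLin$ local constraint $|N_{G^{*}}(v) \cap X_i| \in \alpha_i^{G}(v)$, and I would assign the trivial constraint $[0, |V(G^{*})|]$ to every edge-copy, which is always satisfied. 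Finally, $\mathcal{G}^{*}$ has $p+2$ colors, $|\varphi^{*}| = O(|\varphi|)$, $|V(G^{*})| = O(n^2)$, and $\vi(G^{*}) = O(\vi(G)^2)$; applying \cref{thm:msogl-alg} to this $\MSOGLLin$ instance then yields the claimed fixed-parameter tractability in $\vi(G)$ and $|\varphi|$.
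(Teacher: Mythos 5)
Your proposal is correct and follows essentially the same route as the paper: the paper also reduces to \pname{$\MSOGLLin$ Model Checking} by adding a vertex $v_{e}$ for each edge $e$ while \emph{keeping} the original edges (precisely so that local cardinality constraints for vertex-set and edge-set variables both translate verbatim), proves the same $\vi(G') \le \vi(G)^{2}$ bound by the same component-counting argument, relativizes ex-edge(-set) variables to the new color and rewrites $I(x,y)$ as adjacency, assigns trivial local constraints to the new vertices, and invokes \cref{thm:msogl-alg}. The only cosmetic difference is that you mark vertex-copies with an extra color, whereas the paper uses a single new color for the edge-vertices.
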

By Theorem~\ref{thm:msogl-alg}, it suffices to present a polynomial-time reduction 
from \pname{$\GSOGLLin$ Model Checking} to \pname{$\MSOGLLin$ Model Checking}
that does not increase the vertex integrity too much.
Given an instance of \pname{$\GSOGLLin$ Model Checking} that consists of
a $p$-colored graph $\mathcal{G} = (G=(V,E), \cfun)$ and a $\GSOGLLin$ formula $(\varphi, \vect{R}, \vect{\alpha}^G)$,
we construct an equivalent instance of \pname{$\MSOGLLin$ Model Checking}.
Since most of the reduction below is rather standard,
we basically give the construction only, and add some remarks for important points.

\paragraph*{Modifying the graph.}
We construct a $(p+1)$-colored graph $\mathcal{G}' = (G', \cfun')$ from $\mathcal{G}$
by adding a new color consists of new vertices $v_{e}$ for all $e \in E$,
and adding the edges between $v_{e}$ and the endpoints of $e$ for each $e \in E$.
That is, 
$\cfun' = \cfun \tupplus (C_{E})$ with $C_{E} = \{v_{e} \mid e \in E\}$,
$V(G') = V \cup C_{E}$, and 
$E(G') = E \cup \{\{u,v_{e}\}, \{w,v_{e}\} \mid e = \{u,w\} \in E\}$.
Note that we do not forget the original edge set $E$ (which is redundant)
because handling local cardinality constraints for vertex-set variables is simpler with $E$.
We can easily see that $\vi(G') \le (\vi(G))^{2}$.
\begin{observation}
\label{obs:subdiv}
If $\vi(G) = k$, then $\vi(G') \le k^{2}$.
\end{observation}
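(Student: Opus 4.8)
The plan is to reuse the separator witnessing $\vi(G)=k$ directly in $G'$ and analyze how the newly added edge-vertices $v_{e}$ redistribute among the components. First I would fix a $\vi(k)$-set $S$ of $G$ with $|S| = s$ and $\max_{C}|V(C)| = t$ over the components $C$ of $G-S$, so that $s + t \le k$. I then claim that $S$, viewed as a subset of $V(G')$, is a good separator of $G'$ as well, and the whole argument reduces to bounding the orders of the components of $G'-S$.

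The key structural observation is that each new vertex $v_{e}$, with $e = \{u,w\}$, is adjacent in $G'$ only to $u$ and $w$. Hence in $G'-S$ it is adjacent only to those endpoints of $e$ lying outside $S$. If both endpoints are in $S$, then $v_{e}$ becomes an isolated vertex of $G'-S$; otherwise its surviving neighbor(s) lie in a single component $C$ of $G-S$, since an edge of $G$ with both endpoints outside $S$ joins two vertices of the same component of $G-S$ and therefore cannot bridge distinct components. Consequently every component of $G'-S$ is either a singleton $\{v_{e}\}$ or consists of a component $C$ of $G-S$ together with all vertices $v_{e}$ for edges $e$ incident to $C$. No component consists purely of several edge-vertices, because edge-vertices are pairwise non-adjacent.

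Next I would bound the order of a component of the second kind. The edges incident to a component $C$ are either internal to $C$ or run between $C$ and $S$, so their number is at most $\binom{|V(C)|}{2} + |V(C)|\cdot|S| \le \binom{t}{2} + st$; adding the $|V(C)| \le t$ original vertices gives an order of at most $t + \binom{t}{2} + st$, which also dominates the singleton components. Using $S$ as the separator therefore yields $\vi(G') \le s + t + \binom{t}{2} + st$.

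Finally I would verify the arithmetic $s + t + \binom{t}{2} + st \le (s+t)^{2}$. The difference equals $s(s-1) + st + \tfrac{t(t-1)}{2}$, which is nonnegative for all nonnegative integers $s,t$; since $s + t \le k$ this gives $\vi(G') \le k^{2}$. The only part requiring care is this last estimate, where a naive bound could overshoot $k^{2}$: keeping the two edge-counting contributions separate (internal edges of $C$ versus $C$-to-$S$ edges) is exactly what makes the final inequality come out to $(s+t)^{2}$.
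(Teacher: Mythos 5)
Your proof is correct and follows essentially the same route as the paper: reuse the $\vi(k)$-set $S$ of $G$ as the separator in $G'$, classify the components of $G'-S$ into singletons $\{v_e\}$ (for $e$ inside $S$) and components $C$ of $G-S$ augmented by their incident edge-vertices, and bound the latter by $|V(C)| + \binom{|V(C)|}{2} + |S|\cdot|V(C)|$. Your explicit verification that $s + t + \binom{t}{2} + st \le (s+t)^2$ just spells out the arithmetic the paper leaves implicit.
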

\begin{proof}
Let $S$ be a $\vi(k)$-set of $G = (V,E)$. We show that $S$ is a $\vi(k^{2})$-set of $G'$.
For $e \in E(G[S])$, the vertex $v_{e}$ forms a singleton component in $G'-S$.
Let $D$ be a component of $G'-S$ that contains at least one vertex in $V$.
By the definition of $G'$, 
there exists a component $C$ of $G-S$ such that
\[
  V(D) 
  = 
  V(C) 
  \cup \{v_{e} \mid e \in E(G[C])\} 
  \cup \{v_{\{u,w\}} \mid \{u,w\} \in E, u \in V(C), w \in S\}.
\]
This implies that 
$|S \cup V(D)| \le |S \cup V(C)| + \binom{|V(C)|}{2}+ |S| \cdot |V(C)| \le k^{2}$.
\end{proof}

\paragraph*{Modifying the formula.}
From $\varphi$, we obtain a formula $\varphi'$ as follows.
All edge variables and edge-set variables in $\varphi$ are interpreted
as vertex variables and vertex-set variables, respectively, with the same names in $\varphi'$.
The formula $\varphi'$ asks that all ex-edge variables and ex-edge-set variables are taken from $C_{E}$,
and none of the other variables intersect $C_{E}$. 
For example, if $Y$ is an edge-set variable in $\varphi$, 
we replace the maximal subformula $\psi_{Y}$ of $\varphi$ where $Y$ is defined in with 
$\psi_{Y} \land (\forall y.(\lnot(y \in Y) \lor y \in C_{E}))$, which means $\psi_{Y} \land (Y \subseteq C_{E})$.
Finally, we replace each $I(x,y)$ in $\varphi$ with $E(x,y)$.

\paragraph*{Modifying the cardinality constraints.}
We keep the original local and global cardinality constraints.
Clearly, the global cardinality constraints work as before since nothing changed for them.
Observe that the new vertices in $C_{E}$ do not have local cardinality constraints.
For them, we add dummy local constraints that restrict nothing (e.g., $[0,|V|]$).
Let $X$ be a set variable in $\varphi'$.
Recall that $\varphi'$ ensures that $X \subseteq C_{E}$ if $X$ is an edge-set variable in $\varphi$, 
and $X \cap C_{E} = \emptyset$ otherwise.
Recall also that $G'$ keeps the original edges in $E$.
Therefore, the local cardinality constraints for both ex-edge-set variables and ex-vertex-set variables work correctly.

\section{Concluding remarks}

In this paper, we obtained an algorithmic meta-theorem for graphs of bounded vertex integrity
in a framework introduced as an extension of $\MSO$ by Knop, Kouteck\'{y}, Masa\v{r}\'{\i}k, and Toufar~\cite{KnopKMT19}.
Namely, we showed that \pname{$\MSOGLLin$ Model Checking} (or more generally, \pname{$\GSOGLLin$ Model Checking})
is fixed-parameter tractable parameterized by vertex integrity.
This result partially covers the results of the previous study~\cite{GimaHKKO21}:
some problems admit direct translations from their definitions to expressions in $\MSOGLLin$
(e.g., \pname{Equitable $r$-Coloring}) and 
some need non-trivial modifications to make them expressible in $\MSOGLLin$
 (e.g., {\pname{Capacitated Vertex Cover}}).
For some other problems (e.g., \pname{Imbalance} and \pname{Max Common Subgraph}),
we were not able to determine that they can be captured by our framework or not.
Also, the result newly gives algorithms for \pname{Fair Evaluation Problems}~\cite{KnopMT19}.
It would be interesting to ask whether
there is a meta-theorem that can be applied to a larger class of problems parameterized by vertex integrity.
(See \cref{sec:app}.)

We may also consider the fine-grained complexity of our problem.
We did not explicitly state the time complexity of our fixed-parameter algorithms.
If we carefully analyze the running time using the algorithm by Lampis and Mitsou~\cite{LampisM21},
then we can show that the algorithms run in time triple exponential in a polynomial function of the parameter.
For the ordinary \pname{$\MSO$ Model Checking},
it is known that under $\ETH$,
 there is no $2^{2^{o({k^2})}}n^{O(1)}$-time algorithm,
where $k$ is the vertex integrity of the input graph $G$
and $n$ is the number of vertices of $G$~\cite{LampisM21}.
This double-exponential lower bound applies also to our generalized problem. 
Filling this gap would be an interesting challenge.


\clearpage

\appendix



\section{Applications of the meta-theorem}
\label{sec:app}

Here we present some applications of our main result (\cref{thm:msogl-alg,thm:gsogl-alg}).
We first show that the theorems give some new examples that are fixed-parameter tractable parameterized by vertex integrity.
We also observe that some known results can be obtained from the theorems.
We finally add some remarks on known results that are not captured by the theorems.
In the following, we assume that the readers are familiar with $\MSO$ expressions of graph problems and
omit the actual $\MSO$ formulas. 
See \cite[Section~7.4]{CyganFKLMPPS15} for examples of $\MSO$ expressions of some basic graph properties.

\subsection{New results obtained by the meta-theorem}

\subsubsection{\pname{Fair $\MSO$ Evaluation}}

One of the most immediate consequences of our result is
the fixed-parameter tractability of \pname{Fair $\MSO_{2}$ Evaluation} parameterized 
by vertex integrity and the length of the input $\MSO_{2}$ formula.
\pname{Fair $\MSO_{2}$ Evaluation} asks the existence of a tuple of sets of vertices or edges
such that the tuple satisfies a given $\MSO_{2}$ formula 
and for each set in the tuple, there is an upper bound of the number of vertices or edges
that each vertex in the graph can be adjacent to or incident to.
For example, \pname{Fair Vertex Cover} asks the existence of a vertex cover $C$
with a condition that each vertex in the graph has at most $k$ neighbors in $C$.
It is known that \pname{Fair Vertex Cover} is W[1]-hard parameterized
by both treedepth and feedback vertex set number~\cite{KnopMT19}.
Clearly, \pname{Fair $\MSO_{2}$ Evaluation} is a special case of \pname{$\GSOGLLin$ Model Checking}.
The fairness of solutions in this sense was first introduced for specific problems~\cite{LinS89a}
and later studied in general settings~\cite{KnopMT19,MasarikT20}.

\pname{Defective Coloring}~\cite{CowenCW86} asks, given a graph $G = (V,E)$ and integers $k$ and $d$,
whether $V$ can be partitioned into $k$ sets $V_{1}, \dots, V_{k}$ 
such that the maximum degree of $G[V_{i}]$ is at most $d$ for each $i \in [k]$.
\pname{Defective Coloring} parameterized by treedepth is W[1]-hard
for every fixed $k \ge 2$~\cite{BelmonteLM20}.
As observed in~\cite{KnopMT19}, the problem is 
equivalent to the one that asks for an edge set $F \subseteq E$
such that $G[F]$ has maximum degree at most $d$ and $G-F$ admits a proper $k$-coloring.
Hence, this is a typical example of \pname{Fair $\MSO_{2}$ Evaluation}
with a formula length depending only on $k$.
Observe that if $k \ge \vi(G)$, then $\langle G, k, d \rangle$ is a yes-instance of \pname{Defective Coloring}
as $G$ admits a proper $\vi(G)$-coloring. Thus we have the following result.
\begin{theorem}
\pname{Defective Coloring} is fixed-parameter tractable parameterized by vertex integrity.
\end{theorem}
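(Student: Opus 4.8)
The plan is to reduce \pname{Defective Coloring} to \pname{$\GSOGLLin$ Model Checking} and invoke \cref{thm:gsogl-alg}, using the edge-set reformulation already recalled in the excerpt. First I would handle the trivial regime: if $k \ge \vi(G)$, then $G$ admits a proper $\vi(G)$-coloring (since $\vi(G) \ge \chi(G)$ for every graph, as a proper coloring with $\vi(G)$ colors can be obtained by coloring the $\vi(k)$-set $S$ with distinct colors and each small component independently), and a proper coloring is in particular a defective coloring with defect $0 \le d$. Hence every instance with $k \ge \vi(G)$ is immediately a yes-instance. This crucial observation means we may assume $k < \vi(G)$, so $k$ is bounded by the parameter and the length of the formula we construct will depend only on $\vi(G)$.

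Next I would express the remaining case as a \pname{Fair $\MSO_{2}$ Evaluation} instance. Using the reformulation noted in the text, \pname{Defective Coloring} asks for an edge set $F \subseteq E$ such that the subgraph $G[F]$ (the graph on $V$ with edge set $F$) has maximum degree at most $d$, and the complementary graph $G - F$ admits a proper $k$-coloring. The proper-$k$-colorability of $G - F$ is expressible as an $\MSO_{2}$ formula $\varphi$ with one free edge-set variable $X$ (interpreted as $F$) together with $k$ auxiliary existentially quantified vertex-set variables $V_1, \dots, V_k$: the formula states that $V_1, \dots, V_k$ partition $V$ and that no edge outside $X$ (i.e., no edge $e \in E \setminus X$) has both endpoints in the same $V_j$. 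The length of $\varphi$ depends only on $k$, which is bounded by $\vi(G)$. The maximum-degree-at-most-$d$ condition on $G[F]$ is precisely a local linear cardinality constraint on the edge-set variable $X$: at each vertex $v$, the number of edges of $X$ incident to $v$ must lie in $[0,d]$, matching exactly the fairness semantics of local cardinality constraints on edge-set variables defined in \cref{sec:ext}.

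Putting these together yields a \pname{$\GSOGLLin$ Model Checking} instance (with no global constraints and a single local constraint) whose formula length is bounded by a function of $\vi(G)$ and which is equivalent to the original \pname{Defective Coloring} instance. Applying \cref{thm:gsogl-alg}, which gives fixed-parameter tractability parameterized by $\vi(G)$ and $|\varphi|$, and combining with the branch $k \ge \vi(G)$ handled in constant time, produces an algorithm whose running time is $f(\vi(G))\, n^{O(1)}$. I do not expect any serious obstacle here: the main subtlety is simply ensuring the defect bound $d$ enters only through the local cardinality constraint rather than the formula, so that $|\varphi|$ stays bounded by a function of $\vi(G)$ alone; this is exactly why the edge-set reformulation is used rather than a direct formulation, and it is the one point that genuinely requires the local-constraint machinery of $\GSOGLLin$ instead of plain $\MSO_{2}$.
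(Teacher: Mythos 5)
Your proposal is correct and follows essentially the same route as the paper: the edge-set reformulation ($F \subseteq E$ with $G[F]$ of maximum degree at most $d$ and $G-F$ properly $k$-colorable) viewed as \pname{Fair $\MSO_2$ Evaluation}/\pname{$\GSOGLLin$ Model Checking} with the defect bound $d$ pushed into the local cardinality constraint, combined with the observation that $k \ge \vi(G)$ yields a trivial yes-instance so that $|\varphi|$ depends only on $\vi(G)$. You merely spell out details the paper leaves implicit (the proof that $\chi(G) \le \vi(G)$ and the explicit shape of the formula), so there is no substantive difference.
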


\subsubsection{Alliances in graphs}

A nonempty set $S \subseteq V$ is a \emph{defensive alliance} of a graph $G = (V,E)$ if
\begin{equation}
  |N[v] \cap S| \ge |N[v] \setminus S| \ \text{for every} \ v \in S. \label{eq:alliance}
\end{equation}
Intuitively, a defensive alliance is a set of vertices that is ``safe'' under attacks from its neighborhood~\cite{KristiansenHH04}.
The problem \pname{Defensive Alliance} asks, given a graph $G$ and an integer $k$,
whether $G$ contains a defensive alliance of size at most $k$.
\pname{Defensive Alliance} is W[1]-hard parameterized by treewidth~\cite{BliemW18}.

Now let us express \pname{Defensive Alliance} as \pname{$\GSOGLLin$ Model Checking}.
We first subdivide each edge in the graph. We call the new vertices introduced as $W$ and the new graph as $H = (V \cup W, F)$.%
\footnote{To be more precise, we color the vertices in $W$ with a new color to distinguish them from the original vertices.}
We can see that $\vi(H) \le (\vi(G))^{2}$ as the graph $H$ here is a subgraph of the graph $G'$ in \cref{obs:subdiv}.
Let $X$ be a subset of $V$ and $Y$ a subset of $F$.
We can express the following conditions as a $\GSOGLLin$ formula with free variables corresponding to $X$ and $Y$:
\begin{itemize}
  \item $|X| \le k$;

  \item $Y$ is the set of edges $e$ such that $e$ has one endpoint in $X$
  and the other endpoint of $e$, which belongs to $W$, is adjacent to a vertex in $V \setminus X$;

  \item each $v \in V$ is incident to at most $|N_{G}[v]|/2$ edges in $Y$;
\end{itemize}
To see the correctness, observe that $Y$ corresponds to the set of edges in the original graph $G$
between the vertices in $X$ and $V \setminus X$.
Then the definition of defensive alliances ask the third condition.

Several variants and generalizations of defensive alliances are studied~\cite{KristiansenHH04,FrickeLHHH03,CamiBDD06,FernauR14}.
By replacing the condition ``for every $v \in S$'' with ``for every $v \in N(S)$,'' we obtain the definition of \emph{offensive alliances}.
A vertex set is a \emph{powerful alliance} if it is simultaneously a defensive alliance and an offensive alliance.
A defensive, offensive, or powerful alliance is \emph{global} if it is a dominating set.
All those concepts of alliances can be generalized to \emph{$r$-alliances} by adding a constant $r$
to the right-hand side of the inequality corresponding to \cref{eq:alliance} in their definitions.
Similarly to the case of the ordinary defensive alliance, we can express these variants and generalizations 
as \pname{$\GSOGLLin$ Model Checking}.
Therefore, the problem of finding these alliance of size at most $k$, named with the same rule as \pname{Defensive Alliance},
are fixed-parameter tractable parameterized by vertex integrity.

\begin{theorem}
(\pname{Global}) \pname{Defensive}/\pname{Offensive}/\pname{Powerful} $r$-\pname{Alliance}
are fixed-parameter tractable parameterized by vertex integrity.
\end{theorem}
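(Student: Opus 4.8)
The plan is to encode each listed variant as an instance of \pname{$\GSOGLLin$ Model Checking}, exactly in the spirit of the encoding of \pname{Defensive Alliance} given above, and then to apply \cref{thm:gsogl-alg}. Throughout I work on the edge-subdivided graph $H = (V \cup W, F)$, for which $\vi(H) \le (\vi(G))^{2}$ holds by the argument of \cref{obs:subdiv}, and I use a free vertex-set variable $X \subseteq V$ to represent the candidate alliance $S$ together with one (or, for powerful alliances, two) free edge-set variable(s). The size bound is imposed by a single global constraint $|X| \le k$; the defining inequality of each alliance type is split into an $\MSO_{2}$ part that selects the relevant ``crossing'' edges and a local cardinality constraint that enforces the associated counting inequality. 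Since the length of the resulting formula depends only on the alliance type (and on the fixed constant $r$), while $\vi(H)$ is bounded by a function of $\vi(G)$, \cref{thm:gsogl-alg} yields the claimed fixed-parameter tractability.

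The key idea common to all variants is to define the selected edge set so that its incident edges live only at the vertices where the inequality has to hold, which makes the corresponding local constraint vacuous elsewhere. For the defensive case this is the encoding already given: the edge set $Y$ collects the subdivided edges $\{u, v_{e}\}$ with $u \in X$ whose subdivision vertex is adjacent to a vertex of $V \setminus X$, so that the number of $Y$-edges incident to $u$ equals $|N_{G}(u) \setminus X|$, and the upper bound $|N_{G}(u) \setminus X| \le |N_{G}[u]|/2$ restates \eqref{eq:alliance}; it is vacuous for $u \notin X$ since no $Y$-edge meets such $u$. For the offensive variant I would instead let the $\MSO_{2}$ formula restrict the selected edges to those incident to the boundary vertices $v \in N(S) \setminus S$ and pointing into $V \setminus X$, so that the count at such $v$ equals $|N_{G}(v) \setminus X|$; the offensive inequality then rearranges to the upper bound $|N_{G}(v) \setminus X| \le (\deg_{G}(v) - 1)/2$, again vacuous off the boundary. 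The powerful variant is the conjunction of the defensive and offensive formulas using two edge-set variables; the global versions add an $\MSO$ subformula stating that $X$ is a dominating set; and for $r$-alliances I would shift the threshold in the local constraint by the fixed constant $r$. None of these modifications changes the formula length by more than a constant.

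I expect the main obstacle to be purely this bookkeeping: correctly rearranging each defining inequality into a single integer interval bound (accounting for the $\pm 1$ arising from open versus closed neighborhoods and for the additive constant $r$), choosing the edge-set definition so that the counting variable measures precisely the intended quantity, and arranging the $\MSO_{2}$ part so that the local cardinality constraint is vacuous on all vertices outside the intended set (here $S$ for defensive-type and $N(S) \setminus S$ for offensive-type conditions). Once each encoding is checked against the corresponding definition, fixed-parameter tractability is immediate from \cref{thm:gsogl-alg}, because $\vi(H) \le (\vi(G))^{2}$ and $|\varphi|$ is bounded in terms of the alliance type and $r$ alone.
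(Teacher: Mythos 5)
Your proposal is correct and follows essentially the same route as the paper: subdivide the edges, use the bound $\vi(H) \le (\vi(G))^{2}$ from \cref{obs:subdiv}, encode each alliance inequality via an edge-set variable that selects the crossing edges together with per-vertex local cardinality bounds (and $|X| \le k$ as a global constraint), then apply \cref{thm:gsogl-alg}. Your worked-out encodings for the offensive, powerful, global, and $r$-variants in fact supply the bookkeeping that the paper only gestures at with ``similarly to the case of the ordinary defensive alliance.''
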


\subsection{Known results (partially) captured by the meta-theorem}

\subsubsection{Bounded-degree deletion problems}

Recall that \pname{Bounded-Degree Vertex Deletion}~\cite{GanianKO21} is fixed-parameter tractable parameterized by vertex
integrity.\footnote{The parameter in~\cite{GanianKO21} is a generalization of vertex integrity.}
This problem asks for a minimum number $k$ of vertices to be removed to make the maximum degree at most $d$.
This problem can be easily expressed as \pname{$\MSOGLLin$ Model Checking}.
The formula $\varphi$ has a free vertex-set variable $X$ and a free edge-set variable $Y$
and the following constraints:
\begin{itemize}
  \item $|X| \ge |V| - k$, 
  \item the edge set of $G[X]$ is $Y$, 
  \item $G[X]$ has maximum degree at most $d$.
\end{itemize}

\subsubsection{Equitable partition problems}

Let $G = (V,E)$ be a graph with $n$ vertices, and let $r$ be a positive integer.
A partition of $V$ into $r$ sets $V_{1},\dots,V_{r}$ is an \emph{equitable $r$-partition} if
$|V_{i}| \in \{\lfloor n/r \rfloor, \lceil n/r \rceil\}$ for all $i \in [r]$.
Given $G$ and $r$, 
\pname{Equitable Coloring} asks whether 
$G$ admits an equitable $r$-partition $V_{1},\dots,V_{r}$ such that each $V_{i}$ is an independent set,
and \pname{Equitable Connected Partition} asks whether 
$G$ admits an equitable $r$-partition $V_{1},\dots,V_{r}$ such that each $G[V_{i}]$ is connected.

\pname{Equitable Coloring} is W[1]-hard parameterized by treedepth~\cite{FellowsFLRSST11}.
\pname{Equitable Connected Partition} is W[1]-hard parameterized
simultaneously by pathwidth\footnote{In \cref{sec:w1}, we strengthen the hardness result by replacing this part with treedepth.},
feedback vertex set number, and the number of parts $r$~\cite{EncisoFGKRS09}.

Both problems can be directly expressed as \pname{$\MSOGLLin$ Model Checking},
where the length of the formula depends on $r$ (see \cite{GanianO13}).
More generally, if the property asked for each $G[V_{i}]$ is expressible in $\MSO_{2}$, 
we can express the equitable partition problem as \pname{$\GSOGLLin$ Model Checking}.
This implies a weaker result that the problems are 
fixed-parameter tractable parameterized by both vertex integrity and $r$.

In~\cite{GimaHKKO21}, some problem specific approaches for dealing with unbounded $r$ were taken,
and \pname{Equitable Coloring} and \pname{Equitable Connected Partition} were shown to be 
fixed-parameter tractable parameterized solely by vertex integrity.
It would be interesting to find a unified way for handling 
the general equitable partition problem parameterized by vertex integrity only.

\subsubsection{Capacitated problems}
Let $G = (V,E, c)$ be a \emph{capacitated} graph with a 
\emph{capacity function} $c \colon V \to \mathbb{Z}$ such that $c(v) \le \deg(v)$ for each $v \in V$.
A set $C \subseteq V$ is a \emph{capacitated vertex cover} of $G$
if there is a mapping $f \colon E \to C$ such that $f(e)$ is an endpoint of $e$ for each $e \in E$
and $| \{e \in E \mid f(e) = v\} | \le c(v)$ for each $v \in C$.
That is, each vertex $v$ in a capacitated vertex cover can cover at most $c(v)$ incident edges.
Similarly, a set $D \subseteq V$ is a \emph{capacitated dominating set} of $G$
if there is a mapping $f \colon V \setminus D \to D$ such that $f(v) \in N(v) \cap D$ for each $v \in V \setminus D$
and $| \{v \in V \setminus D \mid f(v) = u\} | \le c(u)$ for each $u \in D$.
Namely, each vertex $v$ in a capacitated dominating set can dominate at most $c(v)$ neighbors.

The problems \pname{Capacitated Vertex Cover} and \pname{Capacitated Dominating Set}
ask whether a given capacitated graph has a capacitated vertex cover 
and a capacitated dominating set, respectively, of size at most $k$.
\pname{Capacitated Vertex Cover} is W[1]-hard parameterized by treedepth,
and \pname{Capacitated Dominating Set} is W[1]-hard parameterized by treedepth and $k$~\cite{DomLSV08}.
Both problems are fixed-parameter tractable parameterized by vertex integrity~\cite{GimaHKKO21}.

Expressing the capacitated problems as \pname{$\GSOGLLin$ Model Checking} is not very straightforward, but can be done as follows.
Let $H = (V \cup W, F)$ be the graph obtained from $G$ by subdividing each edge, where $W$ is the set of new vertices introduced.
Let $X$ be a subset of $V$ and $Y$ a subset of $F$.
We can express the following conditions as a $\GSOGLLin$ formula with free variables corresponding to $X$ and $Y$:
\begin{itemize}
  \item each $v \in X$ is incident to at most $c(v)$ edges in $Y$;
  \item no $v \in V \setminus X$ is incident to edges in $Y$;
  \item $|X| \le k$.
\end{itemize}
Intuitively, $X$ is (a candidate of) a solution of a capacitated problem
and $Y$ indicates how the capacity of each vertex in $X$ is assigned to neighboring edges or vertices.
We still need to express the conditions that in the original graph $G$,
$X$ satisfies the conditions for being a vertex cover or a dominating set.
For \pname{Capacitated Vertex Cover}, we add the following condition:
\begin{itemize}
  \item each $w \in W$ is incident to at least one edge in $Y$.
\end{itemize}
For \pname{Capacitated Dominating Set}, we add the following condition:
\begin{itemize}
  \item for each $v \in V \setminus X$, there exists a neighbor $w \in N_{H}(v)$ incident to some edge in $Y$.
\end{itemize}
Clearly, the conditions above correctly express the problems.

Observe that the last part of adding certain conditions for $X$ (and $Y$) would work for many other problems
as $\MSO_{2}$ properties on the graph $G$ can be expressed as $\MSO_{1}$ properties on its $1$-subdivision $H$
 (folklore, see also \cref{sec:ext}).
More precisely, the following problem can be expressed as \pname{$\GSOGLLin$ Model Checking}.

\begin{myproblem}
  \problemtitle{\pname{Capacitated $\MSO_{2}$ Model Checking}}
  \probleminput{A capacitated graph $G = (V,E,c)$, an $\MSO_{2}$ formula $\varphi(A,B)$, and an integer $k$.}
  \problemquestiontitle{Question}
  \problemquestion{Are there $X \subseteq V$ and $Y \subseteq E$ such that $|X| \le k$,
  $Y$ is a subset of the edges incident to $X$ with $|\{y \in Y \mid x \ \text{is an endpoint of} \ y\}| \le c(x)$ for each $x \in X$,
  and $(G, (X,Y)) \models \varphi$?}
\end{myproblem}

By \cref{thm:gsogl-alg}, we can conclude the following.
\begin{theorem}
\pname{Capacitated $\MSO_{2}$ Model Checking} is fixed-parameter tractable parameterized by 
$\vi(G)$ and $|\varphi|$.
\end{theorem}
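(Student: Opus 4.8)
The plan is to realize \pname{Capacitated $\MSO_{2}$ Model Checking} as a special case of \pname{$\GSOGLLin$ Model Checking} and then invoke \cref{thm:gsogl-alg}. First I would pass to the $1$-subdivision $H = (V \cup W, F)$ of $G$, obtained by replacing each edge $e \in E$ with a fresh vertex $w_{e} \in W$ adjacent to both endpoints of $e$, coloring $W$ with a new color so that original and subdivision vertices can be distinguished. As already noted for the concrete capacitated problems above, $H$ is a subgraph of the graph $G'$ of \cref{obs:subdiv}, so $\vi(H) \le \vi(G') \le (\vi(G))^{2}$; hence a bound on $\vi(G)$ yields a bound on $\vi(H)$. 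Working over $H$ also lets me translate the input $\MSO_{2}$ formula $\varphi(A,B)$ on $G$ into an equivalent $\MSO_{2}$ (indeed $\MSO_{1}$) formula on $H$ by the standard simulation of edge quantification on the $1$-subdivision (folklore; cf.\ \cref{sec:ext}), where an edge set $Y \subseteq E$ is represented by the vertex set $\{w_{e} : e \in Y\} \subseteq W$.

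Next I would build the $\GSOGLLin$ formula. It has a free vertex-set variable $X \subseteq V$ (for $A$) and a free edge-set variable $Y_{H} \subseteq F$; the global constraint $|X| \le k$ is a single linear inequality, and the $\MSO_{2}$ part asserts that (i) every edge of $Y_{H}$ has its $V$-endpoint in $X$, (ii) the induced original edge set $Y := \{e : w_{e} \text{ is incident to some edge of } Y_{H}\}$ together with $X$ satisfies the translation of $\varphi$, and (iii) for each such active $w_{e}$ and each endpoint $x \in X$ of $e$, the incident edge $\{x, w_{e}\}$ lies in $Y_{H}$. With (i)--(iii) in place, at every $x \in X$ the number of edges of $Y_{H}$ incident to $x$ equals $|\{e \in Y : x \in e\}|$, while at every $v \in V \setminus X$ no edge of $Y_{H}$ is incident to $v$. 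I would therefore impose the local cardinality constraint $[0, c(v)]$ on $Y_{H}$ at each $v \in V$, and a trivial constraint at each $w \in W$.

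The step I expect to be the main obstacle is encoding the capacity bound, which is \emph{conditional}: the inequality $\le c(x)$ must be enforced only at the chosen vertices $x \in X$, whereas a local cardinality constraint is a fixed per-vertex mapping that is evaluated at \emph{every} vertex regardless of the assignment. This is precisely what the subdivision buys: by attributing each selected edge only to its half-edges at $X$-endpoints (condition (i)), the uniform local bound $[0, c(v)]$ becomes the genuine capacity constraint for $v \in X$ and is vacuously satisfied (count $0 \le c(v)$) for $v \notin X$. Once correctness of this encoding is verified, the construction outputs a bona fide \pname{$\GSOGLLin$ Model Checking} instance whose graph has vertex integrity at most $(\vi(G))^{2}$ and whose formula length is bounded in terms of $|\varphi|$, so \cref{thm:gsogl-alg} yields the claimed fixed-parameter tractability.
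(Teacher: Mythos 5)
Your proposal is correct and follows essentially the same route as the paper: pass to the $1$-subdivision $H$ (with $\vi(H) \le (\vi(G))^{2}$ via \cref{obs:subdiv}), translate the $\MSO_{2}$ formula to the subdivision, enforce the capacities through uniform local cardinality constraints $[0,c(v)]$ on the half-edge set $Y_{H}$ made vacuous outside $X$ by the requirement that edges of $Y_{H}$ have their $V$-endpoint in $X$, add the global constraint $|X|\le k$, and invoke \cref{thm:gsogl-alg}. In fact your write-up is more careful than the paper's sketch on one point: your condition (iii), forcing \emph{both} half-edges of a selected edge into $Y_{H}$ when both endpoints lie in $X$, is exactly what is needed so that the local count at each $x \in X$ equals $|\{e \in Y : x \in e\}|$ as the problem definition demands.
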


\subsection{Known results (probably) not captured by the meta-theorem}

We have shown above that several known fixed-parameter tractability results can be obtained by applying our meta-theorem.
However, some of the known results seem not to be captured by the theorem.
We list them below with some points that make them difficult to be captured (which might be bypassed by some clever ideas).
\pname{Subgraph Isomorphism}~\cite{BodlaenderHKKOO20},
\pname{Maximum Common (Induced) Subgraph}~\cite{GimaHKKO21}, and
\pname{Locally Constrained Homomorphism}~\cite{BulteauDKOP22_arxiv} 
involve two graphs of unbounded size.
The definition of \pname{Imbalance}~\cite{GimaHKKO21} involves linear orderings of vertices.
\pname{Precoloring Extension}~\cite{GimaHKKO21} may use many, say $\Omega(n)$, colors in the input precoloring.
It would be interesting to further extend the study to capture (some of) these problems.


\section{W[1]-hardness of \textsc{Equitable Connected Partition} parameterized by treedepth}
\label{sec:w1}

As mentioned before, \pname{Equitable Connected Partition} is known to be W[1]-hard parameterized
simultaneously by pathwidth, feedback vertex set number, and the number of parts $r$~\cite{EncisoFGKRS09}.
In this section, we strengthen the W[1]-hardness by replacing pathwidth in the parameter with treedepth,
where the treedepth of a graph is always larger than or equal to its $\textrm{pathwidth} + 1$.
To the best of our knowledge, the complexity of \pname{Equitable Connected Partition} parameterized by treedepth was not known before.
(The reduction in~\cite{EncisoFGKRS09} uses long paths and thus the output instances have unbounded treedepth.)

The \emph{treedepth} $\td(G)$ of a graph $G = (V,E)$ is defined as follows:
\[
  \td(G) =
  \begin{cases}
    1 & |V| = 1,\\
    \max_{i \in [c]} \td(C_{i}) & G \textrm{ has $c\ge 2$ connected components } C_{1}, \dots, C_{c}, \\
    1 + \min_{v \in V} \td(G - v) & \textrm{otherwise}.
  \end{cases}
\]
Observe that $\td(G) \le \vi(G)$ for every graph $G$:
by removing a set $S \subseteq V(G)$, the treedepth decreases by at most $|S|$;
and $\td(G - S) \le \max_{C \in \mathrm{cc}(G-S)} |V(C)|$,
where $\mathrm{cc}(G-S)$ is the set of connected components of $G-S$.

\begin{theorem}
\textsc{Equitable Connected Partition} is W[1]-hard parameterized
simultaneously by treedepth, feedback vertex set number, and the number of parts.
\end{theorem}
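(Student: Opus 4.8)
The plan is to give a fresh polynomial-time reduction from \textsc{Unary Bin Packing} parameterized by the number of bins, which is W[1]-hard (Jansen, Kratsch, Marx, and Schlotter), and to replace the long paths of the previous reduction by \emph{stars} so that the output has bounded treedepth. Starting from a bin-packing instance with $B$ bins of capacity $C$ and items of sizes $s_1,\dots,s_m$ satisfying $\sum_i s_i = BC$ and $s_i \le C$ for all $i$ (the latter without loss of generality, since an item larger than $C$ yields a trivial no-instance), I would build a graph $G'$ as follows. First I would create $B$ \emph{center} vertices $c_1,\dots,c_B$. For each item $i$ I would attach a star $S_i$ consisting of a \emph{port} $p_i$ and $s_i-1$ leaves, and make $p_i$ adjacent to every center $c_j$, adding no other edges between distinct stars. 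I would then ask for an equitable partition into $r = B$ parts; since $n = B + \sum_i s_i = B(1+C)$ is divisible by $B$, every part is forced to contain exactly $T = 1+C$ vertices.

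The correctness argument rests on two rigidity observations. A leaf of $S_i$ has $p_i$ as its only neighbour, so in any connected part of size $T \ge 2$ each leaf must lie in the same part as $p_i$; hence no star is ever split across parts. Moreover, a part containing no center cannot be connected: a port whose part holds no center reaches only its own leaves, so that part would be confined to $S_i$, which has only $s_i \le C < T$ vertices, while a center-free part consisting of leaves alone is disconnected. Thus each of the $B$ parts contains at least one of the $B$ centers, and by counting exactly one. A valid packing then maps to the partition ``center $c_j$ together with the stars of the items assigned to bin $j$'', which is connected (all ports see $c_j$) and has size $1+C$; conversely, any equitable connected partition sends each item-star to the unique center of its part, yielding a packing in which each bin receives items of total size $T-1 = C$.

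The last step is to bound the three parameters simultaneously. Deleting the $B$ centers from $G'$ leaves a disjoint union of stars, which is a forest of treedepth at most $2$. Hence the feedback vertex set number is at most $B$, and since removing $B$ vertices lowers treedepth by at most $B$, we get $\td(G') \le B+2$. Together with $r = B$, all three parameters are bounded by a function of the source parameter $B$, and the reduction runs in polynomial time because the sizes are encoded in unary.

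I expect the main obstacle to be exactly the point that the earlier reduction handled with long paths: designing a gadget that is simultaneously \emph{free} (routable into any of the $r$ parts) and \emph{rigid} (indivisible, and unusable unless anchored to a part), while keeping treedepth constant. The star resolves this tension, but the delicate part of the write-up will be verifying the two rigidity claims above — in particular that no part can avoid containing a center — since this is what forces the one-center-per-part structure that makes the equitable size constraint encode the bin capacities.
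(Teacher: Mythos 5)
Your proposal is correct and is essentially the paper's own reduction: both reduce from \textsc{Unary Bin Packing} parameterized by the number of bins, encode each item as a star (hub plus $s_i-1$ leaves), and make every item hub adjacent to every bin vertex, so that equitable part sizes encode the bin capacity. The only difference is in how ``exactly one bin vertex per part'' is enforced: the paper attaches $2B-1$ pendants to each bin vertex and uses a weight-counting argument, whereas you leave the centers bare and use a connectivity argument (relying on the harmless assumption $s_i \le C$); both arguments are sound and yield the same parameter bounds.
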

\begin{proof}
We present a reduction from \textsc{Unary Bin Packing}.
Given a positive integer $t$ and $n$ positive integers $a_{1}, a_{2}, \dots, a_{n}$ in unary,
\textsc{Unary Bin Packing} asks whether the set $[n]$ can be partitioned into $t$ subsets $S_{1}, \dots, S_{t}$
such that $\sum_{i \in S_{j}} a_{i} = \frac{1}{t}\sum_{i \in [n]} a_{i}$ for each $j \in [t]$.
It is known that \textsc{Unary Bin Packing} is W[1]-hard parameterized by $t$~\cite{JansenKMS13}.

Let $\mathcal{I} = \langle t; a_{1}, a_{2}, \dots, a_{n} \rangle$ be an instance of \textsc{Unary Bin Packing}
with $\frac{1}{t} \sum_{i \in [n]} a_{i} = B$.
Observe that $B$ has to be an integer as otherwise $\mathcal{I}$ is a trivial no-instance.
From $\mathcal{I}$, we construct a graph as follows.
Take a complete bipartite graph with bipartition $(U,W)$
such that $U = \{u_{1}, \dots, u_{n}\}$ and $W = \{w_{1}, \dots, w_{t}\}$.
For each $u_{i}$, we attach $a_{i}-1$ pendants (that is, vertices of degree~$1$).
Also, for each $w_{i}$, we attach $2B-1$ pendants.
We call the obtained graph $G$. See \cref{fig:ecp}.
Note that $G$ has $3tB$ vertices.

\begin{figure}[htb]
\centering
\includegraphics[scale=1]{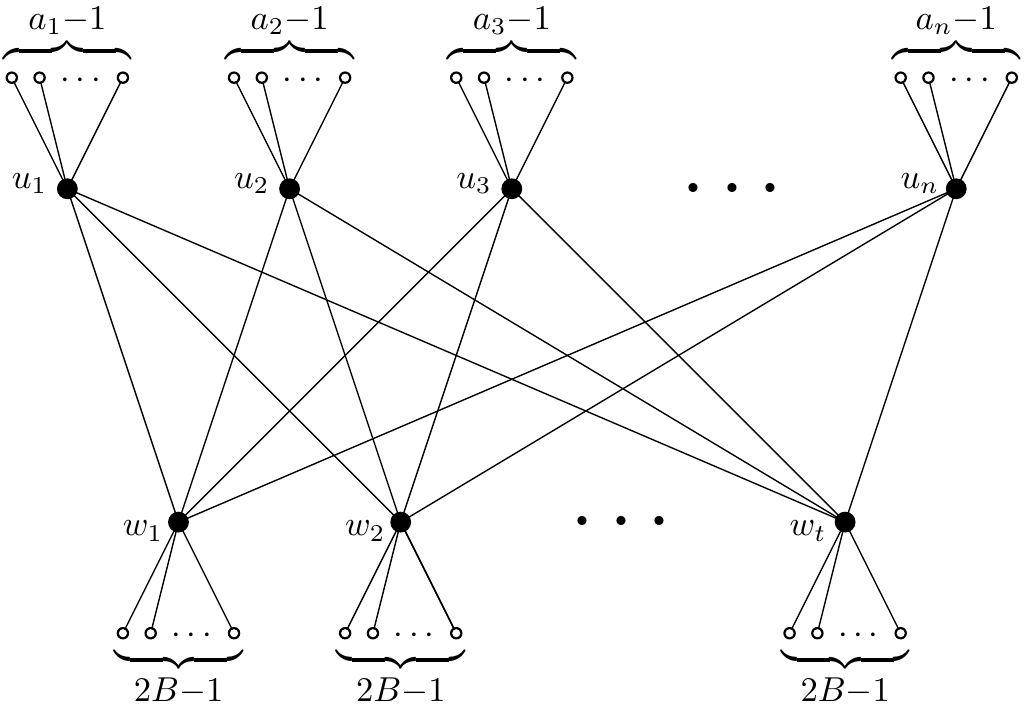}
\caption{The reduction from \textsc{Unary Bin Packing} to \textsc{Equitable Connected Partition}.}
\label{fig:ecp}
\end{figure}

We show that $\mathcal{I}$ is a yes-instance of \textsc{Unary Bin Packing}
if and only if $\langle G, t\rangle$ is a yes-instance of \textsc{Equitable Connected Partition}.
Observe that $G$ has treedepth at most $t + 2$
since after the removal of $W$ it becomes a disjoint union of stars and isolated vertices.
This also means that $W$ is a feedback vertex set.
Thus the equivalence of $\mathcal{I}$ and  $\langle G, t\rangle$ implies the theorem.

To show the only-if direction, assume that there is a partition $S_{1}, \dots, S_{t}$ of $[n]$
such that $\sum_{i \in S_{j}} a_{i} = B$ for each $j \in [t]$.
For each $j \in [t]$, let $V_{j}$ be the set formed by
$w_{j}$, the vertices in $\{u_{i} \mid i \in S_{j}\}$, and the pendants adjacent to them.
Clearly, $G[V_{j}]$ is connected and $|V_{j}| = 2B + \sum_{i \in S_{j}} a_{i} = 3B$.
Thus, the partition $V_{1}, \dots, V_{t}$ is a yes-certificate for $\langle G, t\rangle$.

To show the if direction, assume that there is a partition $V_{1}, \dots, V_{t}$ of $V(G)$
such that $G[V_{j}]$ is connected and $|V_{j}| = 3B$ for each $j \in [t]$.
Since $2B > 1$, a pendant and its unique neighbor belongs to the same set in the partition.
Hence, if we set $f(V_{j}) = \sum_{u_{i} \in V_{j}} a_{i} + \sum_{w_{h} \in V_{j}} 2B$,
then we have $f(V_{j}) = |V_{j}|$.
Since $f(V_{j}) = |V_{j}| = 3B$ for each $j \in [t]$, each $V_{j}$ includes exactly one vertex in $\{w_{1},\dots,w_{t}\}$.
This implies that $f(V_{j} \setminus \{w_{1},\dots,w_{t}\}) = B$,
and thus $\sum_{u_{i} \in V_{j}} a_{i} = B$.
By setting $S_{j} = \{i \mid u_{i} \in V_{j}\}$, we obtain a yes-certificate for $\mathcal{I}$.
\end{proof}




\bibliography{msovi}

\begin{thebibliography}{10}

\bibitem{ArnborgLS91}
Stefan Arnborg, Jens Lagergren, and Detlef Seese.
\newblock Easy problems for tree-decomposable graphs.
\newblock {\em J. Algorithms}, 12(2):308--340, 1991.
\newblock \href {https://doi.org/10.1016/0196-6774(91)90006-K}
  {\path{doi:10.1016/0196-6774(91)90006-K}}.

\bibitem{BarefootES87}
Curtis~A. Barefoot, Roger~C. Entringer, and Henda~C. Swart.
\newblock Vulnerability in graphs --- a comparative survey.
\newblock {\em J. Combin. Math. Combin. Comput.}, 1:13--22, 1987.

\bibitem{BelmonteKLMO20}
R{\'{e}}my Belmonte, Eun~Jung Kim, Michael Lampis, Valia Mitsou, and Yota
  Otachi.
\newblock Grundy distinguishes treewidth from pathwidth.
\newblock In {\em {ESA} 2020}, volume 173 of {\em LIPIcs}, pages 14:1--14:19,
  2020.
\newblock \href {https://doi.org/10.4230/LIPIcs.ESA.2020.14}
  {\path{doi:10.4230/LIPIcs.ESA.2020.14}}.

\bibitem{BelmonteLM20}
R{\'{e}}my Belmonte, Michael Lampis, and Valia Mitsou.
\newblock Parameterized (approximate) defective coloring.
\newblock {\em {SIAM} J. Discret. Math.}, 34(2):1084--1106, 2020.
\newblock \href {https://doi.org/10.1137/18M1223666}
  {\path{doi:10.1137/18M1223666}}.

\bibitem{BliemW18}
Bernhard Bliem and Stefan Woltran.
\newblock Defensive alliances in graphs of bounded treewidth.
\newblock {\em Discret. Appl. Math.}, 251:334--339, 2018.
\newblock \href {https://doi.org/10.1016/j.dam.2018.04.001}
  {\path{doi:10.1016/j.dam.2018.04.001}}.

\bibitem{Bodlaender98}
Hans~L. Bodlaender.
\newblock A partial $k$-arboretum of graphs with bounded treewidth.
\newblock {\em Theor. Comput. Sci.}, 209(1-2):1--45, 1998.
\newblock \href {https://doi.org/10.1016/S0304-3975(97)00228-4}
  {\path{doi:10.1016/S0304-3975(97)00228-4}}.

\bibitem{BodlaenderHKKOO20}
Hans~L. Bodlaender, Tesshu Hanaka, Yasuaki Kobayashi, Yusuke Kobayashi, Yoshio
  Okamoto, Yota Otachi, and Tom~C. van~der Zanden.
\newblock Subgraph isomorphism on graph classes that exclude a substructure.
\newblock {\em Algorithmica}, 82(12):3566--3587, 2020.
\newblock \href {https://doi.org/10.1007/s00453-020-00737-z}
  {\path{doi:10.1007/s00453-020-00737-z}}.

\bibitem{BoriePT92}
Richard~B. Borie, R.~Gary Parker, and Craig~A. Tovey.
\newblock Automatic generation of linear-time algorithms from predicate
  calculus descriptions of problems on recursively constructed graph families.
\newblock {\em Algorithmica}, 7(5{\&}6):555--581, 1992.
\newblock \href {https://doi.org/10.1007/BF01758777}
  {\path{doi:10.1007/BF01758777}}.

\bibitem{BulteauDKOP22_arxiv}
Laurent Bulteau, Konrad~K. Dabrowski, Noleen K{\"{o}}hler, Sebastian Ordyniak,
  and Dani{\"{e}}l Paulusma.
\newblock An algorithmic framework for locally constrained homomorphisms.
\newblock {\em CoRR}, abs/2201.11731, 2022.
\newblock \href {http://arxiv.org/abs/2201.11731} {\path{arXiv:2201.11731}}.

\bibitem{CamiBDD06}
Aurel Cami, Hemant Balakrishnan, Narsingh Deo, and Ronald~D. Dutton.
\newblock On the complexity of finding optimal global alliances.
\newblock {\em J. Combin. Math. Combin. Comput.}, 58:23--31, 2006.

\bibitem{Courcelle90mso1}
Bruno Courcelle.
\newblock The monadic second-order logic of graphs. {I}. recognizable sets of
  finite graphs.
\newblock {\em Inf. Comput.}, 85(1):12--75, 1990.
\newblock \href {https://doi.org/10.1016/0890-5401(90)90043-H}
  {\path{doi:10.1016/0890-5401(90)90043-H}}.

\bibitem{Courcelle92mso3}
Bruno Courcelle.
\newblock The monadic second-order logic of graphs {III}: tree-decompositions,
  minor and complexity issues.
\newblock {\em {RAIRO} Theor. Informatics Appl.}, 26:257--286, 1992.
\newblock \href {https://doi.org/10.1051/ita/1992260302571}
  {\path{doi:10.1051/ita/1992260302571}}.

\bibitem{CourcelleE12}
Bruno Courcelle and Joost Engelfriet.
\newblock {\em Graph Structure and Monadic Second-Order Logic - {A}
  Language-Theoretic Approach}.
\newblock Cambridge University Press, 2012.
\newblock URL: \url{https://www.cambridge.org/knowledge/isbn/item5758776/}.

\bibitem{CourcelleMR00}
Bruno Courcelle, Johann~A. Makowsky, and Udi Rotics.
\newblock Linear time solvable optimization problems on graphs of bounded
  clique-width.
\newblock {\em Theory Comput. Syst.}, 33(2):125--150, 2000.
\newblock \href {https://doi.org/10.1007/s002249910009}
  {\path{doi:10.1007/s002249910009}}.

\bibitem{CowenCW86}
Lenore~J. Cowen, Robert Cowen, and Douglas~R. Woodall.
\newblock Defective colorings of graphs in surfaces: Partitions into subgraphs
  of bounded valency.
\newblock {\em J. Graph Theory}, 10(2):187--195, 1986.
\newblock \href {https://doi.org/10.1002/jgt.3190100207}
  {\path{doi:10.1002/jgt.3190100207}}.

\bibitem{CyganFKLMPPS15}
Marek Cygan, Fedor~V. Fomin, {\L}ukasz Kowalik, Daniel Lokshtanov, D{\'{a}}niel
  Marx, Marcin Pilipczuk, Micha{\l} Pilipczuk, and Saket Saurabh.
\newblock {\em Parameterized Algorithms}.
\newblock Springer, 2015.
\newblock \href {https://doi.org/10.1007/978-3-319-21275-3}
  {\path{doi:10.1007/978-3-319-21275-3}}.

\bibitem{DomLSV08}
Michael Dom, Daniel Lokshtanov, Saket Saurabh, and Yngve Villanger.
\newblock Capacitated domination and covering: {A} parameterized perspective.
\newblock In {\em {IWPEC} 2008}, volume 5018 of {\em Lecture Notes in Computer
  Science}, pages 78--90. Springer, 2008.
\newblock \href {https://doi.org/10.1007/978-3-540-79723-4_9}
  {\path{doi:10.1007/978-3-540-79723-4_9}}.

\bibitem{DrangeDH16}
P{\aa}l~Gr{\o}n{\aa}s Drange, Markus~S. Dregi, and Pim van~'t Hof.
\newblock On the computational complexity of vertex integrity and component
  order connectivity.
\newblock {\em Algorithmica}, 76(4):1181--1202, 2016.
\newblock \href {https://doi.org/10.1007/s00453-016-0127-x}
  {\path{doi:10.1007/s00453-016-0127-x}}.

\bibitem{DvorakEGKO17}
Pavel Dvo\v{r}\'{a}k, Eduard Eiben, Robert Ganian, Du\v{s}an Knop, and
  Sebastian Ordyniak.
\newblock Solving integer linear programs with a small number of global
  variables and constraints.
\newblock In {\em {IJCAI} 2017}, pages 607--613, 2017.
\newblock \href {https://doi.org/10.24963/ijcai.2017/85}
  {\path{doi:10.24963/ijcai.2017/85}}.

\bibitem{EncisoFGKRS09}
Rosa Enciso, Michael~R. Fellows, Jiong Guo, Iyad~A. Kanj, Frances~A. Rosamond,
  and Ond\v{r}ej Such{\'{y}}.
\newblock What makes equitable connected partition easy.
\newblock In {\em {IWPEC} 2009}, volume 5917 of {\em Lecture Notes in Computer
  Science}, pages 122--133, 2009.
\newblock \href {https://doi.org/10.1007/978-3-642-11269-0_10}
  {\path{doi:10.1007/978-3-642-11269-0_10}}.

\bibitem{FellowsFLRSST11}
Michael~R. Fellows, Fedor~V. Fomin, Daniel Lokshtanov, Frances~A. Rosamond,
  Saket Saurabh, Stefan Szeider, and Carsten Thomassen.
\newblock On the complexity of some colorful problems parameterized by
  treewidth.
\newblock {\em Inf. Comput.}, 209(2):143--153, 2011.
\newblock \href {https://doi.org/10.1016/j.ic.2010.11.026}
  {\path{doi:10.1016/j.ic.2010.11.026}}.

\bibitem{FellowsLMRS08}
Michael~R. Fellows, Daniel Lokshtanov, Neeldhara Misra, Frances~A. Rosamond,
  and Saket Saurabh.
\newblock Graph layout problems parameterized by vertex cover.
\newblock In {\em {ISAAC} 2008}, volume 5369 of {\em Lecture Notes in Computer
  Science}, pages 294--305, 2008.
\newblock \href {https://doi.org/10.1007/978-3-540-92182-0_28}
  {\path{doi:10.1007/978-3-540-92182-0_28}}.

\bibitem{FernauR14}
Henning Fernau and Juan~A. Rodr{\'{\i}}guez{-}Vel{\'{a}}zquez.
\newblock A survey on alliances and related parameters in graphs.
\newblock {\em Electron. J. Graph Theory Appl.}, 2(1):70--86, 2014.
\newblock \href {https://doi.org/10.5614/ejgta.2014.2.1.7}
  {\path{doi:10.5614/ejgta.2014.2.1.7}}.

\bibitem{FialaGK11}
Jir{\'{\i}} Fiala, Petr~A. Golovach, and Jan Kratochv{\'{\i}}l.
\newblock Parameterized complexity of coloring problems: Treewidth versus
  vertex cover.
\newblock {\em Theor. Comput. Sci.}, 412(23):2513--2523, 2011.
\newblock \href {https://doi.org/10.1016/j.tcs.2010.10.043}
  {\path{doi:10.1016/j.tcs.2010.10.043}}.

\bibitem{FrankT87}
Andr{\'{a}}s Frank and {\'{E}}va Tardos.
\newblock An application of simultaneous diophantine approximation in
  combinatorial optimization.
\newblock {\em Combinatorica}, 7:49--65, 1987.
\newblock \href {https://doi.org/10.1007/BF02579200}
  {\path{doi:10.1007/BF02579200}}.

\bibitem{FrickG04}
Markus Frick and Martin Grohe.
\newblock The complexity of first-order and monadic second-order logic
  revisited.
\newblock {\em Ann. Pure Appl. Log.}, 130(1-3):3--31, 2004.
\newblock \href {https://doi.org/10.1016/j.apal.2004.01.007}
  {\path{doi:10.1016/j.apal.2004.01.007}}.

\bibitem{FrickeLHHH03}
Gerd~H. Fricke, Linda~M. Lawson, Teresa~W. Haynes, Sandra~M. Hedetniemi, and
  Stephen~T. Hedetniemi.
\newblock A note on defensive alliances in graphs.
\newblock {\em Bull. Inst. Combin. Appl.}, 38:37--41, 2003.

\bibitem{FujitaF18}
Shinya Fujita and Michitaka Furuya.
\newblock Safe number and integrity of graphs.
\newblock {\em Discret. Appl. Math.}, 247:398--406, 2018.
\newblock \href {https://doi.org/10.1016/j.dam.2018.03.074}
  {\path{doi:10.1016/j.dam.2018.03.074}}.

\bibitem{FujitaMS16}
Shinya Fujita, Gary MacGillivray, and Tadashi Sakuma.
\newblock Safe set problem on graphs.
\newblock {\em Discret. Appl. Math.}, 215:106--111, 2016.
\newblock \href {https://doi.org/10.1016/j.dam.2016.07.020}
  {\path{doi:10.1016/j.dam.2016.07.020}}.

\bibitem{GajarskyH15}
Jakub Gajarsk{\'{y}} and Petr Hlin\v{e}n{\'{y}}.
\newblock Kernelizing {MSO} properties of trees of fixed height, and some
  consequences.
\newblock {\em Log. Methods Comput. Sci.}, 11(1), 2015.
\newblock \href {https://doi.org/10.2168/LMCS-11(1:19)2015}
  {\path{doi:10.2168/LMCS-11(1:19)2015}}.

\bibitem{Ganian11}
Robert Ganian.
\newblock Twin-cover: Beyond vertex cover in parameterized algorithmics.
\newblock In {\em {IPEC} 2011}, volume 7112 of {\em Lecture Notes in Computer
  Science}, pages 259--271, 2011.
\newblock \href {https://doi.org/10.1007/978-3-642-28050-4_21}
  {\path{doi:10.1007/978-3-642-28050-4_21}}.

\bibitem{GanianHNOMR12}
Robert Ganian, Petr Hlinen{\'{y}}, Jaroslav Nesetril, Jan Obdrz{\'{a}}lek,
  Patrice~Ossona de~Mendez, and Reshma Ramadurai.
\newblock When trees grow low: Shrubs and fast {MSO1}.
\newblock In {\em {MFCS} 2012}, volume 7464 of {\em Lecture Notes in Computer
  Science}, pages 419--430, 2012.
\newblock \href {https://doi.org/10.1007/978-3-642-32589-2_38}
  {\path{doi:10.1007/978-3-642-32589-2_38}}.

\bibitem{GanianKO21}
Robert Ganian, Fabian Klute, and Sebastian Ordyniak.
\newblock On structural parameterizations of the bounded-degree vertex deletion
  problem.
\newblock {\em Algorithmica}, 83(1):297--336, 2021.
\newblock \href {https://doi.org/10.1007/s00453-020-00758-8}
  {\path{doi:10.1007/s00453-020-00758-8}}.

\bibitem{GanianO13}
Robert Ganian and Jan Obdr{\v{z}}{\'{a}}lek.
\newblock Expanding the expressive power of monadic second-order logic on
  restricted graph classes.
\newblock In {\em {IWOCA} 2013}, volume 8288 of {\em Lecture Notes in Computer
  Science}, pages 164--177, 2013.
\newblock \href {https://doi.org/10.1007/978-3-642-45278-9_15}
  {\path{doi:10.1007/978-3-642-45278-9_15}}.

\bibitem{GimaHKKO21}
Tatsuya Gima, Tesshu Hanaka, Masashi Kiyomi, Yasuaki Kobayashi, and Yota
  Otachi.
\newblock Exploring the gap between treedepth and vertex cover through vertex
  integrity.
\newblock In {\em {CIAC} 2021}, volume 12701 of {\em Lecture Notes in Computer
  Science}, pages 271--285, 2021.
\newblock \href {https://doi.org/10.1007/978-3-030-75242-2_19}
  {\path{doi:10.1007/978-3-030-75242-2_19}}.

\bibitem{GroheK09}
Martin Grohe and Stephan Kreutzer.
\newblock Methods for algorithmic meta theorems.
\newblock In {\em Model Theoretic Methods in Finite Combinatorics}, volume 558
  of {\em Contemporary Mathematics}, pages 181--206, 2009.

\bibitem{HlinenyOSG08}
Petr Hlin\v{e}n{\'{y}}, Sang{-}il Oum, Detlef Seese, and Georg Gottlob.
\newblock Width parameters beyond tree-width and their applications.
\newblock {\em Comput. J.}, 51(3):326--362, 2008.
\newblock \href {https://doi.org/10.1093/comjnl/bxm052}
  {\path{doi:10.1093/comjnl/bxm052}}.

\bibitem{JansenKMS13}
Klaus Jansen, Stefan Kratsch, D{\'{a}}niel Marx, and Ildik{\'{o}} Schlotter.
\newblock Bin packing with fixed number of bins revisited.
\newblock {\em J. Comput. Syst. Sci.}, 79(1):39--49, 2013.
\newblock \href {https://doi.org/10.1016/j.jcss.2012.04.004}
  {\path{doi:10.1016/j.jcss.2012.04.004}}.

\bibitem{Kannan87}
Ravi Kannan.
\newblock Minkowski's convex body theorem and integer programming.
\newblock {\em Math. Oper. Res.}, 12:415--440, 1987.
\newblock \href {https://doi.org/10.1287/moor.12.3.415}
  {\path{doi:10.1287/moor.12.3.415}}.

\bibitem{KnopMT19}
Du{\v{s}}an Knop, Tom{\'{a}}s Masar{\'{\i}}k, and Tom{\'{a}}s Toufar.
\newblock Parameterized complexity of fair vertex evaluation problems.
\newblock In {\em {MFCS} 2019}, volume 138 of {\em LIPIcs}, pages 33:1--33:16,
  2019.
\newblock \href {https://doi.org/10.4230/LIPIcs.MFCS.2019.33}
  {\path{doi:10.4230/LIPIcs.MFCS.2019.33}}.

\bibitem{KnopKMT19}
Du\v{s}an Knop, Martin Kouteck\'{y}, Tom\'{a}\v{s} Masa\v{r}\'{\i}k, and
  Tom\'{a}\v{s} Toufar.
\newblock Simplified algorithmic metatheorems beyond {MSO}: Treewidth and
  neighborhood diversity.
\newblock {\em Log. Methods Comput. Sci.}, 15(4), 2019.
\newblock \href {https://doi.org/10.23638/LMCS-15(4:12)2019}
  {\path{doi:10.23638/LMCS-15(4:12)2019}}.

\bibitem{Kreutzer11}
Stephan Kreutzer.
\newblock Algorithmic meta-theorems.
\newblock In Javier Esparza, Christian Michaux, and Charles Steinhorn, editors,
  {\em Finite and Algorithmic Model Theory}, volume 379 of {\em London
  Mathematical Society Lecture Note Series}, pages 177--270. 2011.
\newblock \href {https://doi.org/10.1017/cbo9780511974960.006}
  {\path{doi:10.1017/cbo9780511974960.006}}.

\bibitem{KristiansenHH04}
Petter Kristiansen, Sandra~M. Hedetniemi, and Stephen~T. Hedetniemi.
\newblock Alliances in graphs.
\newblock {\em J. Combin. Math. Combin. Comput.}, 48:157--177, 2004.

\bibitem{Lampis12}
Michael Lampis.
\newblock Algorithmic meta-theorems for restrictions of treewidth.
\newblock {\em Algorithmica}, 64(1):19--37, 2012.
\newblock \href {https://doi.org/10.1007/s00453-011-9554-x}
  {\path{doi:10.1007/s00453-011-9554-x}}.

\bibitem{LampisM21}
Michael Lampis and Valia Mitsou.
\newblock Fine-grained meta-theorems for vertex integrity.
\newblock In {\em {ISAAC} 2021}, volume 212 of {\em LIPIcs}, pages 34:1--34:15,
  2021.
\newblock \href {https://doi.org/10.4230/LIPIcs.ISAAC.2021.34}
  {\path{doi:10.4230/LIPIcs.ISAAC.2021.34}}.

\bibitem{Lenstra83}
Hendrik~W. {Lenstra Jr.}
\newblock Integer programming with a fixed number of variables.
\newblock {\em Math. Oper. Res.}, 8(4):538--548, 1983.
\newblock \href {https://doi.org/10.1287/moor.8.4.538}
  {\path{doi:10.1287/moor.8.4.538}}.

\bibitem{LinS89a}
Li{-}Shin Lin and Sartaj Sahni.
\newblock Fair edge deletion problems.
\newblock {\em {IEEE} Trans. Computers}, 38(5):756--761, 1989.
\newblock \href {https://doi.org/10.1109/12.24280}
  {\path{doi:10.1109/12.24280}}.

\bibitem{MasarikT20}
Tom\'{a}\v{s} Masa\v{r}\'{\i}k and Tom\'{a}\v{s} Toufar.
\newblock Parameterized complexity of fair deletion problems.
\newblock {\em Discret. Appl. Math.}, 278:51--61, 2020.
\newblock \href {https://doi.org/10.1016/j.dam.2019.06.001}
  {\path{doi:10.1016/j.dam.2019.06.001}}.

\bibitem{Szeider10}
Stefan Szeider.
\newblock Not so easy problems for tree decomposable graphs.
\newblock {\em Ramanujan Mathematical Society, Lecture Notes Series},
  No.~13:179--190, 2010.
\newblock \href {http://arxiv.org/abs/1107.1177} {\path{arXiv:1107.1177}}.

\bibitem{Szeider11}
Stefan Szeider.
\newblock Monadic second order logic on graphs with local cardinality
  constraints.
\newblock {\em {ACM} Trans. Comput. Log.}, 12(2):12:1--12:21, 2011.
\newblock \href {https://doi.org/10.1145/1877714.1877718}
  {\path{doi:10.1145/1877714.1877718}}.

\end{thebibliography}

\end{document}